\newtheorem{theorem}{Theorem}
\newtheorem{definition}{Definition}
\newtheorem{example}{Example}
\newcommand{\alt}{~~|~~}
\newcommand{\binopdef}     \oplus 
\newcommand{\unopdef}      \ominus 
\DeclareRobustCommand*\cal{\@fontswitch\relax\mathcal}
\def\cA{{\cal A}}
\def\cD{{\cal D}}
\def\cL{{\cal L}}
\newcommand{\tup} [1] {\langle #1 \rangle}
    \newcommand{\infral} [3] {\infer[\textsc{#3}]{\begin{array}{c} #2 \end{array} }{ \begin{array}{c} #1  \end{array} } }
\newcommand{\Comment} [1] {}
\newcommand{\arxiv} [1] {}
\newcommand{\dref} [1] {}
\newcolumntype{H}{>{\setbox0=\hbox\bgroup}c<{\egroup}@{}}
\begin{document}

\title[Inductive Program Synthesis over Noisy Data]{Inductive Program Synthesis over Noisy Data}         


\author{Shivam Handa}
\affiliation{
  \department{Electrical Engineering and Computer Science}              
  \institution{Massachusetts Institute of Technology}            
  \country{U.S.A.}                    
}
\email{shivam@mit.edu}          

\author{Martin C. Rinard}
\affiliation{
  \department{Electrical Engineering and Computer Science}             
  \institution{Massachusetts Institute of Technology}           
  \country{U.S.A.}                   
}
\email{rinard@csail.mit.edu}         

\begin{abstract}
    We present a new framework and associated synthesis algorithms for
program synthesis over noisy data, i.e., data that may contain
incorrect/corrupted input-output examples.  This framework is based on
an extension of finite tree automata called {\em state-weighted finite tree
automata}.  We show how to apply this framework to formulate and solve
a variety of program synthesis problems over noisy data.  Results from
our implemented system running on problems from the SyGuS 2018 benchmark
suite highlight its ability to successfully synthesize programs in the
face of noisy data sets, including the ability to synthesize a correct
program even when every input-output example in the data set is corrupted.

\end{abstract}

\begin{CCSXML}
<ccs2012>
<concept>
<concept_id>10011007.10011006.10011050.10011056</concept_id>
<concept_desc>Software and its engineering~Programming by example</concept_desc>
<concept_significance>500</concept_significance>
</concept>
<concept>
<concept_id>10003752.10003766</concept_id>
<concept_desc>Theory of computation~Formal languages and automata theory</concept_desc>
<concept_significance>300</concept_significance>
</concept>
    </ccs2012>
\end{CCSXML}

\ccsdesc[300]{Theory of computation~Formal languages and automata theory}
\ccsdesc[500]{Software and its engineering~Programming by example}
\ccsdesc[500]{Computing methodologies~Machine learning}

 \keywords{Program Synthesis, Noisy Data, Corrupted Data, Machine Learning}  

\maketitle

\vspace{-.1in}
\section{Introduction}
In recent years there has been significant interest in learning programs
from input-output examples. 
These techniques have been successfully used to synthesize programs
for domains such as string and format transformations~\cite{gulwani2011automating, 
singh2016transforming}, 
data wrangling~\cite{feng2017component}, data completion~\cite{wang2017synthesis}, and
data structure manipulation~\cite{feser2015synthesizing, 
osera2015type, yaghmazadeh2016synthesizing}.
Even though these efforts have been largely successful, they 
do not aspire to work with noisy data sets that 
may contain corrupted input-output examples.

We present a new program synthesis technique that is designed to work
with noisy/corrupted data sets. Given: 
\begin{itemize}[leftmargin=*]
\item {\bf Programs:} A collection of programs $p$ defined by a 
grammar $G$ and a bounded scope threshold $d$, 
\item {\bf Data Set:} A data set $\cD = \{ (\sigma_1, o_1 ), \ldots, ( \sigma_n,
    o_n ) \}$ of 
input-output examples, 
\item {\bf Loss Function:} A loss function $\cL(p,\cD)$ that measures the cost of the input-output examples on which
$p$ produces a different output than the output in the data set $\cD$, 
\item {\bf Complexity Measure:} A complexity measure $C(p)$ 
    that measures the complexity
of a given program $p$, 
\item {\bf Objective Function:} An arbitrary objective function
    $U(l, c)$, which maps loss $l$ and complexity $c$ to a totally ordered set, 
        such that for all values of $l$, $U(l, c)$ is 
        monotonically nondecreasing with respect to $c$, 
\end{itemize}
our program synthesis technique produces a program $p$ that 
minimizes $U(\cL(p,\cD),C(p))$.
Example problems that our program synthesis technique can solve include:

\begin{itemize}[leftmargin=*]

\item {\bf Best Fit Program Synthesis:} Given a potentially noisy data set $\cD$, find a 
{\em best fit} program $p$ for $\cD$, i.e., a $p$ that 
minimizes $\cL(p,\cD)$. 

\item {\bf Accuracy vs. Complexity Tradeoff:} Given a data set $\cD$, 
    find $p$ that minimizes the weighted sum $\cL(p,\cD) + \lambda \cdot C(p)$. 
    This problem enables the programmer to define and
    minimize a weighted tradeoff between the complexity of the program
    and the loss. 

\item {\bf Data Cleaning and Correction:} Given a data set $\cD$, 
find $p$ that minimizes the loss $\cL(p,\cD)$. Input-output
examples with nonzero loss are identified as corrupted and either 
1) filtered out or 2) replaced with the output from the synthesized
program. 

\item{\bf Bayesian Program Synthesis:} Given a data set $\cD$ and a probability
    distribution $\pi(p)$ over programs $p$, find the most probable program $p$ given $\cD$. 


\item {\bf Best Program for Given Accuracy:} Given a data set $\cD$ and a bound $b$, find $p$ that minimizes $C(p)$
    subject to $\cL(p,\cD) \leq b$. One use case finds the simplest program that agrees with the data set $\cD$ on 
    at least $n-b$ input-output examples. 

\item {\bf Forced Accuracy:} Given data sets $\cD'$, $\cD$, where $\cD' \subseteq \cD$, find $p$ that minimizes 
    the weighted sum $\cL(p,\cD) + \lambda \cdot C(p)$ subject to $\cL(p,\cD') \leq b$. 
    One use case finds a program $p$ which minimizes the loss over the data set $\cD$ but is 
    always correct for $\cD'$. 

\item {\bf Approximate Program Synthesis:} Given a clean (noise-free) data set $\cD$,
find the least complex program $p$ that minimizes the loss $\cL(p,\cD)$. Here the
goal is not to work with a noisy data set, but instead 
to find the best approximate solution to a synthesis problem when
an exact solution does not exist within the collection of considered programs $p$. 


\Comment{
Program synthesis systems are typically designed 
to synthesize a program that matches all input-output examples in the data set. Our technique, in
contrast, supports {\em state-weighted program synthesis}: given a clean (noise-free) data set $\cD$, 
find the least complex program $p$ that minimizes the loss $\cL(p,\cD)$ over the data set. 

        Our technique therefore enables the approximate solution of inductive program synthesis problems when 
an exact solution does not exist within the DSL or within a finite bounded program search scope. 
}

\end{itemize}

\vspace{-.1in}
\subsection{Noise Models and Discrete Noise Sources}

We work with noise models that assume a (hidden) clean data set 
combined with a noise source that delivers the noisy data set presented to the 
program synthesis system. Like many inductive program synthesis systems~\cite{gulwani2011automating, 
singh2016transforming}, one
target is {\em discrete problems} that involve 
discrete data such as strings, data structures, or tablular data. 
In contrast to traditional machine learning problems, which often involve
continuous noise sources~\cite{bishop2006pattern}, the noise sources for 
discrete problems often involve {\em discrete noise} ---
noise that involves a discrete change that affects only part of each output,
leaving the remaining parts intact and uncorrupted. 

\vspace{-.1in}
\subsection{Loss Functions and Use Cases}

Different loss functions can be appropriate for different noise sources and
use cases. The $0/1$ loss function, which counts the number of input-output
examples where the data set $\cD$ and synthesized program $p$ do not agree, 
is a general loss function that can be appropriate when the focus is to maximize the 
number of inputs for which the synthesized program $p$ produces the correct output. 
The Damerau-Levenshtein (DL) loss function~\cite{damerau1964technique}, which measures the edit difference
under character insertions, deletions, substitutions, and/or transpositions, 
extracts information present in partially corrupted outputs and can be appropriate for measuring
discrete noise in input-output examples involving text strings. 
The $0/\infty$ loss function, which is $\infty$ unless $p$ agrees
with all of the input-output examples in the data set $\cD$, 
specializes our technique to the standard program synthesis scenario that 
requires the synthesized program to agree with all input-output examples.  

Because discrete noise sources often leave parts of corrupted outputs intact, 
{\em exact program synthesis} (i.e., synthesizing a program that agrees with all outputs in the hidden 
clean data set) is often possible even when all outputs in the data set are corrupted. 
Our experimental results (Section~\ref{sec:results}) indicate that matching the loss
function to the characteristics of the discrete noise source can enable very
accurate program synthesis even when 1) there are only a handful of input-output examples in 
the data and 2) all of the outputs in the data set are corrupted. We attribute this
success to the ability of our synthesis technique, working in conjunction with
an appropriately designed loss function, to effectively extract information from outputs 
corrupted by discrete noise sources. 

\subsection{Technique} 

Our technique augments finite tree automata (FTA) to associate accepting states with weights
that capture the loss for the output associated with each accepting state.  Given a data set $\cD$, 
the resulting {\em state-weighted finite tree automata} (SFTA) partition the programs $p$ defined
by the grammar $G$ into equivalence classes. Each equivalence class consists
of all programs with identical input-output behavior on the inputs in $\cD$. 
All programs in a given equivalence class therefore have the same loss over $\cD$. 
The technique then uses dynamic programming to find the minimum complexity program $p$ in each 
equivalence class~\cite{gallo1993directed}. From this set of minimum complexity programs, the technique
then finds the program $p$ that minimizes the objective function $U(p,\cD)$. 

\vspace{-.1in}
\subsection{Experimental Results} 
We have implemented our technique and applied it to various programs in the 
SyGuS 2018 benchmark set~\cite{SyGuS2018}.
The results indicate that the technique is effective at solving program synthesis
problems over strings with modestly sized solutions even in the presence of 
substantial noise. For discrete noise sources and a loss function that is a 
good match for the noise source, the technique is typically able to extract
enough information left intact in corrupted outputs to synthesize a correct program
even when all outputs are corrupted (in this paper we consider a synthesized program to be correct
if it agrees with all input-output examples in the original hidden clean data set). 
Even with the 0/1 loss function, which does not aspire to extract any information from
corrupted outputs, the technique is typically able to synthesize a correct program
even with only a few correct (uncorrupted) input-output examples in the data set. 
Overall the results highlight the potential for effective program synthesis even
in the presence of substantial noise. 

\vspace{-.1in}
\subsection{Contributions}

This paper makes the following contributions:
\begin{itemize}[leftmargin=*]
\item {\bf Technique:} It presents an implemented technique for inductive program
synthesis over noisy data. The technique uses an extension of finite
tree automata, {\em state-weighted finite tree automata}, to synthesize
programs that minimize an objective function involving the loss over
the input data set and the complexity of the synthesized program. 

\item {\bf Use Cases:} It presents multiple uses cases including
best fit program synthesis for noisy data sets, 
navigating accuracy vs. complexity tradeoffs, Bayesian program synthesis, 
identifying and correcting corrupted data, and 
approximate program synthesis.

\item {\bf Experimental Results:} It presents experimental results
from our implemented system on the SyGuS 2018 benchmark set. These results 
characterize the scalability of the technique and highlight interactions between
the DSL, the noise source, the loss function, and the overall effectiveness of the 
synthesis technique. In particular, they highlight the ability of the 
technique to, given a close match between the noise source and the loss
function, synthesize a correct program $p$ even when 1) there are
only a handful of input-output examples in the data set $\cD$ and 2) 
all outputs are corrupted. 

\end{itemize}

\section{Preliminaries}
We next review finite tree automata (FTA) and 
FTA-based inductive program synthesis.

\subsection{Finite Tree Automata}

{\it Finite tree automata} are a type of state machine which accept 
trees rather than strings. They generalize standard finite automata to 
describe a regular language over trees.

\begin{definition}[\bf FTA]
    A (bottom-up) finite tree automaton (FTA) over alphabet $F$ is a tuple
    $\cA = (Q, F, Q_f, \Delta)$ where $Q$ is a set of states, $Q_f
    \subseteq Q$ is the set of accepting states and $\Delta$ is a set of
    transitions of the form $f(q_1, \ldots, q_k) \rightarrow q$ where 
    $q, q_1, \ldots q_k$ are states, $f \in F$.
\end{definition}

Every symbol $f$ in alphabet $F$ has an associated arity. The set 
$F_k \subseteq F$ is the set of all $k$-arity symbols in $F$.
$0$-arity terms $t$ in $F$ are viewed as single node trees (leaves of trees).
$t$ is accepted by an FTA if we can rewrite $t$ to some state $q \in Q_f$ using
rules in $\Delta$. The language of an $FTA$ $\cA$, denoted
by $\cL(\cA)$, corresponds to the set of all
ground terms accepted by $\cA$.

\begin{example}
    Consider the tree automaton $\cA$ defined by states $Q = \{q_{T},
    q_{F}\}$, $F_0 = \{\mathsf{True}, \mathsf{False}\}$, 
    $F_1 = \mathsf{not}$, $F_2 = \{\mathsf{and}\}$, 
    final states $Q_f = \{q_{T}\}$ and the following transition rules $\Delta$:
    \[
        \begin{array}{cc}
            \mathsf{True} \rightarrow q_T 
            &
            \mathsf{False} \rightarrow q_F
            \\
            \mathsf{not}(q_T) \rightarrow q_F
            &
            \mathsf{not}(q_F) \rightarrow q_T
            \\
            \mathsf{and}(q_T, q_T) \rightarrow q_T
            &
            \mathsf{and}(q_F, q_T) \rightarrow q_F
            \\
            \mathsf{and}(q_T, q_F) \rightarrow q_F
            &
            \mathsf{and}(q_F, q_F) \rightarrow q_F
            \\
            \mathsf{or}(q_T, q_T) \rightarrow q_T
            &
            \mathsf{or}(q_F, q_T) \rightarrow q_T
            \\
            \mathsf{or}(q_T, q_F) \rightarrow q_T
            &
            \mathsf{or}(q_F, q_F) \rightarrow q_F
            \\
        \end{array}
    \]
\end{example}

The above tree automaton accepts all propositional logic formulas
over $\mathsf{True}$ and $\mathsf{False}$
which evaluate to $\mathsf{True}$. 
Figure~\ref{fig:formula_tree} presents the tree for the formula
$\mathsf{and}(\mathsf{True}, \mathsf{not}(\mathsf{False}))$. 

\begin{figure}
    \begin{tikzpicture}[shorten >=1pt,node distance=1.25cm,on grid]
        \node[state]   (q1)                {$\mathsf{and}$};
        \node[state]           (q2) [above right=of q1] {$\mathsf{True}$};
        \node[state] (q3) [below right=of q1] {$\mathsf{not}$};
        \node[state] (q4) [right=of q3] {$\mathsf{False}$};
        \path[->] (q2) edge                node [] {} (q1)
                    (q3) edge node [] {} (q1)
                    (q4) edge node [] {} (q3);
\end{tikzpicture}
\vspace{-.1in}
    \caption{Tree for formula 
\vspace{-.1in}
    $\mathsf{and}(\mathsf{True}, \mathsf{not}(\mathsf{False}))$}
    \label{fig:formula_tree}
\end{figure}

\subsection{Domain Specific Languages (DSLs)}
\label{sec:dsl}

We next define the programs we consider, how inputs to the program are 
specified, and the program semantics. Without loss of generality, we assume
programs $p$ are specified as parse trees in a domain-specific language (DSL) grammar $G$. 
Internal nodes represent function invocations; leaves are constants/0-arity symbols in the DSL. 
A program $p$ executes in an input $\sigma$.  $\llbracket p \rrbracket \sigma$ denotes the
output of $p$ on input $\sigma$ ($\llbracket . \rrbracket$ is defined in Figure~\ref{fig:exec_sem}).

\begin{figure}
    \[
        \begin{array}{c}
            \begin{array}{cc}
            \infral{}
            {\llbracket c \rrbracket \sigma \Rightarrow c}
            {(Constant)}
            &
            \infral{}
            {\llbracket x \rrbracket \sigma \Rightarrow \sigma(x)}
            {(Variable)}
            \end{array}
            \\
            \\
            \infral{
            \llbracket n_1 \rrbracket \sigma \Rightarrow v_1 ~~~~~ 
            \llbracket n_2 \rrbracket \sigma \Rightarrow v_2 ~~~~ \ldots ~~~~
            \llbracket n_k \rrbracket \sigma \Rightarrow v_k
            }
            {\llbracket f(n_1, n_2, \ldots n_k) \rrbracket \sigma  \Rightarrow
            f(v_1, v_2, \ldots v_k)}{(Function)}
    \end{array}
    \]
\vspace{-.1in}
    \caption{Execution semantics for program $p$}
\vspace{-.1in}
    \label{fig:exec_sem}
\end{figure}

All valid programs (which can be executed) are defined by a DSL grammar $G = (T, N, P, s_0)$ where:
\begin{itemize}[leftmargin=*]
    \item $T$ is a set of terminal symbols. These may include 
        constants and symbols which may change value depending on the input
        $\sigma$.
    \item $N$ is the set of nonterminals that represent subexpressions in our
        DSL.
    \item $P$ is the set of` production rules of the form \\ $s \rightarrow f(s_1,
        \ldots, s_n)$, where $f$ is a built-in function in the DSL and 
        $s, s_1, \ldots, s_n$ are non-terminals in the grammar.
    \item $s_0 \in N$ is the start non-terminal in the grammar.
\end{itemize}

We assume that we are given a black box implementation of each built-in function $f$
in the DSL.  In general, all techniques explored within this paper can be generalized to
any DSL which can be specified within the above framework.

\begin{example}\label{ex:dsl}
    The following DSL defines expressions over input x, 
    constants 2 and 3, and addition and multiplication:
    \[
        \begin{array}{rcl}
            n &:=& x \alt n + t \alt n \times t; \\
            t &:=& 2 \alt 3;
        \end{array}
    \]
\end{example}

\subsection{Concrete Finite Tree Automata}\label{subsec:cfta}
We review the approach introduced by \cite{wang2017program, wang2017synthesis} to use 
finite tree automata to solve synthesis tasks over a broad class of DSLs.
Given a DSL and a set of input-output examples, a {\it Concrete Finite Tree
Automaton} {\bf (CFTA)} is a tree automaton which accepts all
trees representing DSL programs consistent with the input-output examples
and nothing else. The states of the FTA correspond to concrete values
and the transitions are obtained using the semantics of the DSL constructs.

\begin{figure}[t]
    \[
        \begin{array}{c}
        \begin{array}{cc}
            \infral{t \in T, ~~~~~ \llbracket t \rrbracket \sigma = c}
            {q^c_t \in Q}
            {(Term)}
            &
            \infral{q^{o}_{s_0} \in Q}
            {q^{o}_{s_0} \in Q_f}
            {(Final)}
            \\
         \end{array}
            \\
            \\
            \infral{s \rightarrow f(s_1, \ldots s_k) \in P, ~~~~~~~~~ 
            \{ q_{s_1}^{c_1}, \ldots, q_{s_k}^{c_k} \} \subseteq Q,\\
            \llbracket f(c_1, \ldots c_k) \rrbracket\sigma = c}
            {q_s^c \in Q, ~~~~~ f(q_{s_1}^{c_1}, \ldots, q_{s_k}^{c_k}) 
            \rightarrow q_s^c \in \Delta}{(Prod)}
        \end{array}
    \]
\vspace{-.1in}
    \caption{Rules for constructing a CFTA $\cA = (Q, F, Q_f, \Delta)$
    given input $\sigma$, output $o$, and grammar $G = (T, N, P, s_0)$.}
\vspace{-.1in}
    \label{fig:cfta_rules}
\end{figure}

Given an input-output example $(\sigma, o)$ and DSL $(G, \llbracket . \rrbracket)$, 
construct a CFTA using the rules in Figure~\ref{fig:cfta_rules}. The alphabet 
of the CFTA contains built-in functions within the DSL. The states 
in the CFTA are of the form $q_s^c$, where $s$ is a symbol (terminal or
non-terminal) in $G$ and $c$ is a concrete value.
The existence of state $q^c_s$ implies that there exists a partial program  
which can 
map $\sigma$ to concrete value $c$. Similarly, the existence of 
transition $f(q_{s_1}^{c_1}, q_{s_2}^{c_2} \ldots q_{s_k}^{c_k}) \rightarrow
q_s^c$ implies $f(c_1, c_2 \ldots c_k) = c$. 

The $\mathsf{Term}$ rule states that if we have a terminal $t$ (either 
a constant in our language or input symbol $x$), execute it with the input
$\sigma$ and construct a state $q_t^c$ (where $c = \llbracket t
\rrbracket \sigma$). The $\mathsf{Final}$ rule states
that, given start symbol $s_0$ and we expect $o$ as the output, if
$q^o_{s_0}$ exists, then we have an accepting state. The $\mathsf{Prod}$  rule 
states that, if we have a production rule 
$f(s_1, s_2, \ldots s_k) \rightarrow s \in \Delta$, and 
there exists states $q_{s_1}^{c_1}, q_{s_2}^{c_2} \ldots q_{s_k}^{c_k} \in Q$, 
then we also have state $q_s^c$ in the CFTA and a transition 
$f(q_{s_1}^{c_1}, q_{s_2}^{c_2}, \ldots q_{s_k}^{c_k}) \rightarrow
q_{s}^{c}$. 

The language of the CFTA constructed from Figure~\ref{fig:cfta_rules} is exactly 
the set of parse trees of DSL programs that are consistent
with our input-output example (i.e., maps input $\sigma$ to output
$o$).

In general, the rules in Figure~\ref{fig:cfta_rules} may result in a CFTA which 
has infinitely many states. To control the size of the resulting CFTA, we do not 
add a new state within the constructed CFTA if the smallest tree it will accept is larger 
than a given threshold $d$. 
This results in a CFTA which accepts all programs which are consistent with the 
input-output example but are smaller than the given threshold (it may
accept some programs which are larger than the given threshold but it will never
accept a program which is inconsistent with the input-output example).
This is standard practice in the synthesis
literature~\cite{wang2017program, polozov2015flashmeta}.  

\subsection{Intersection of Two CFTAs}

\noindent Given two CFTAs $\cA_1$ and $\cA_2$ built over the same grammar $G$
from input-output examples $(\sigma_1, o_1)$ and $(\sigma_2, o_2)$
respectively, the intersection of these two automata $\cA$ contains
programs which satisfy both input-output examples (or has the empty language).
Given CFTAs $\cA = (Q, F, Q_f, \Delta)$ and 
$\cA' = (Q', F', Q'_f, \Delta)$, $\cA^* = (Q^*, F, Q^*_f, \Delta^*)$ is the
intersection of $\cA$ and $\cA'$, where $Q^*, Q^*_f,$ and $\Delta^*$ are the
smallest set such that:
\[
    q^{\vec{c_1}}_s \in Q \text{ and } q^{\vec{c_2}}_s \in Q' \text{ then }
    q^{\vec{c_1}:\vec{c_2}}_s \in Q^*
\]
\[
q^{\vec{c_1}}_s \in Q_f \text{ and } q^{\vec{c_2}}_s \in Q'_f \text{ then }
    q^{\vec{c_1}:\vec{c_2}}_s \in Q^*_f
\]
\[
    f(q^{\vec{c_1}}_{s_1}, \ldots q^{\vec{c_k}}_{s_k}) \rightarrow q^{\vec{c}}_s 
    \in \Delta \text{ and } 
 f(q^{\vec{c'_1}}_{s_1}, \ldots q^{\vec{c'_k}}_{s_k}) \rightarrow
 q^{\vec{c'}}_s \in \Delta' 
\]
\[
    \text{then }  f(q^{\vec{c_1}:\vec{c'_1}}_{s_1}, \ldots
    q^{\vec{c_k}:\vec{c'_k}}_{s_k})  \rightarrow q^{\vec{c}:\vec{c'}}_s\in
    \Delta^* 
\]
where $\vec{c}$ denotes a vector of values and $\vec{c_1}:\vec{c_2}$ 
denote a vector constructed  by appending vector
$\vec{v_2}$ at the end of vector $\vec{v_1}$.

\section{Loss Functions}\label{sec:lossFunction}
Given a data set 
$\cD = \{ (\sigma_1, o_1 ), \ldots, ( \sigma_n, o_n ) \}$
and a program $p$, a {\bf Loss Function} $\cL(p,\cD)$ 
calculates how incorrect the program is with respect to the given
data set. We work with loss functions $\cL(p, \cD)$ that depend only on the data set and the
outputs of the program for the inputs in the data set, i.e., given programs $p_1, p_2$, such
that for all $(\sigma_i, o_i) \in \cD$, $\llbracket p_1 \rrbracket \sigma_i =
\llbracket p_2 \rrbracket \sigma_i$, then $\cL(p_1, \cD) = \cL(p_2, \cD)$. 
We also further assume that the loss function $\cL(p, \cD)$ can be expressed in 
the following form:
\[
    \cL(p, \cD) = \sum\limits_{i=1}^n L(o_i, \llbracket p \rrbracket \sigma_i)
\]
where $L(o_i, \llbracket p \rrbracket \sigma_i)$ is a per-example loss function.

\Comment{
XXX what is this XXX
\[
    L(p, \cD) = \cL({\tup{\llbracket p \rrbracket \sigma_i, 
    o_i}\vert (\sigma_i, o_i) \in \cD})
\]

XXX We often work with loss functions  $\cL(p, \cD)$ of the form XXX

\Comment{
Given an input-output example $(\sigma_i, o_i)$, we say the function $l_i$ is 
the output weight function
if 
\[
    l_i(o) = l(o_i, o)
\]
}
}

\begin{definition}
{\bf $0/1$ Loss Function:} The $0/1$ loss function \\ 
$\cL_{0/1}(p, \cD)$ counts the number of
    input-output examples where $p$ does not agree with the data set $\cD$:
    \[
        \cL_{0/1}(p, \cD) = \sum\limits_{i=1}^n 1 ~~\mathrm{if}~~ (o_i \neq \llbracket p \rrbracket \sigma_i) ~~\mathrm{else}~~ 0
    \]
\end{definition}

\begin{definition}
{\bf $0/\infty$ Loss Function:} The $0/\infty$ loss function \\ 
$\cL_{0/\infty}(p, \cD)$ is 0 if $p$ matches all outputs in the data set $\cD$ and
$\infty$ otherwise:

    \[
        \cL_{0/\infty}(p, \cD) = 0 ~~\mathrm{if}~~ (\forall (\sigma, o) \in \cD . o = \llbracket p \rrbracket \sigma) ~~\mathrm{else}~~~ \infty
    \]
\end{definition}

\begin{definition}
{\bf Damerau-Levenshtein (DL) Loss Function:} The DL loss function 
$\cL_{DL}(p, \cD)$ uses the {\it Damerau-Levenshtein} metric~\cite{damerau1964technique},
to measure the distance between the output from the synthesized program and the
corresponding output in the noisy data set: 
\[
    \cL_{DL}(p, \cD) = \sum\limits_{(\sigma_i, o_i) \in \cD} 
    L_{\llbracket p \rrbracket \sigma_i, o_i}\big(\left\vert \llbracket p \rrbracket
    \sigma_i \right\vert, \left\vert o_i \right\vert\big)
\]
where, $L_{a, b}(i, j)$ is the {\it Damerau-Levenshtein} metric~\cite{damerau1964technique}.
\end{definition}
    This metric counts the number of 
single character deletions, insertions, substitutions, or transpositions
required to convert one text string into another. 
Because more than $80\%$ of all human misspellings 
are reported to be captured by a single one of these four 
operations~\cite{damerau1964technique}, the DL loss function may
be appropriate for computations that work with human-provided
text input-output examples. 

\Comment{
\begin{figure*}
    \centering
\begin{equation*}
    d_{a, b}(i, j) = 
    \min 
    \begin{cases}
    0 & i = j = 0 \\
        d_{a, b}(i - 1, j) + 1 & i > 0 \text{ \#Deletion} \\
        d_{a, b}(i, j - 1) + 1 & j > 0 \text{ \#Insertion}\\
        d_{a, b}(i - 1, j - 1) + \mathds{1}(a_i \neq b_k) & i, j > 0 \text{
            \#Substitution}\\
        d_{a, b}(i - 2, j - 2) + 1 & i,j > 1, a_i = b_{j-1}, a_{i-1} = b_j 
        \text{ \#Transposition}
        \end{cases}
\end{equation*}
\centering
    \caption{Damerau-Levenshtein Distance}
    \label{fig:DLdistance}
\end{figure*}
}

\Comment{
\begin{example}{\bf ($0/1$ loss function)}
    Consider the following $0/1$ loss function, which counts the number of
    input-output examples where $p$ does not agree with the data set $\cD$.
    \[
        L_{0/1}(p, \cD) = \sum\limits_{i=1}^n 1 ~~\mathrm{if}~~ (o_i \neq \llbracket p \rrbracket \sigma_i) ~~\mathrm{else}~~ 0
    \]
    where the (per-example) loss function $\mathds{1}(o \neq \llbracket p \rrbracket
    \sigma)$ is the indicator function.
\end{example}
}

\section{Complexity Measure}
\label{sec:complexityMeasure}
Given a program $p$, a {\bf Complexity Measure} $C(p)$ ranks 
programs independent of the input-output examples in the 
data set $\cD$.  This measure is used to trade off performance on the noisy data set vs. complexity
of the synthesized program.  Formally, a complexity measure is a function
$C(p)$ that maps each program $p$ expressible in the given
DSL $G$ to a real number. The following $\mathrm{Cost}(p)$ complexity
measure computes the complexity of given program $p$ represented as a parse tree recursively as follows:
\[
\begin{array}{rcl}
\mathrm{Cost}(t) &=& \mathrm{cost}(t)\\
\mathrm{Cost}(f(e_1, e_2, \ldots e_k))  &=& \mathrm{cost}(f) + \sum\limits_{i = 1}^k \mathrm{Cost}(e_i) 
\end{array}
\]
where $t$ and $f$ are terminals and  built-in functions in our DSL respectively.
Setting $\mathrm{cost}(t) = \mathrm{cost}(f) = 1$ delivers a complexity measure 
$\mathrm{Size}(p)$ that computes the size of $p$.  

Given an FTA $\cA$, we can use dynamic programming to find the minimum complexity parse tree 
(under the above $\mathrm{Cost}(p)$ measure) accepted by $\cA$~\cite{gallo1993directed}. 
In general, given an FTA $\cA$, we assume 
we are provided with a method to find the program $p$ accepted by $\cA$ which
minimizes the complexity measure. 

\Comment{

XXX - should be size complexity for consistency with results section - XXX
\begin{example}{\bf (Size complexity measure)}
    Consider the complexity function $Cost_s$:
 \[
\begin{array}{rcl}
    Cost_s(t) &=& 1\\
Cost_s(f(e_1, e_2, \ldots e_k))  &=& 1 + \sum\limits_{i = 1}^k Cost_s(e_i) 
\end{array}
\]
where $t$ and $f$ are terminals and  built-in functions in our DSL respectively.
$Cost_s$ measures the size of input parse tree.
\end{example}
}

\section{Objective Functions}
Given loss $l$ and complexity $c$, an {\bf Objective Function} $U(l,c)$ maps $l,c$ 
to a totally ordered set such that for all $l$,
$U(l, c)$ is monotonically nondecreasing with respect to $c$. 

\begin{definition}
{\bf Tradeoff Objective Function:}
    Given a tradeoff parameter $\lambda > 0$, 
    the tradeoff objective function $U_{\lambda}(l, c) = l + \lambda c$.
\end{definition}

This objective function trades the loss of the synthesized program off against
the complexity of the synthesized program. 
Similarly to how regularization can prevent a machine learning model from overfitting noisy data by biasing the training algorithm to
pick a simpler model, the tradeoff objective function may prevent the algorithm from
synthesizing a program which overfits the data by biasing it to pick a simpler
program (based on the complexity measure).

\begin{definition}
{\bf Lexicographic Objective Function:}
   A lexicographic objective function
   $U_L(l,c) = \tup{l,c}$ maps $l$ and $c$ 
   into a lexicographically ordered space, i.e., $\tup{l_1, c_1} < \tup{l_1, c_2}$ if and only
   if either $l_1 < l_2$ or $l_1 = l_2$ and $c_1 < c_2$.
\end{definition}
This objective function first minimizes the loss, then the complexity. It may be appropriate, for
example, for best fit program synthesis, data cleaning and correction, and approximate program synthesis
over clean data sets. 

\section{State-Weighted FTA}
State-weighted finite tree automata (SFTA) are FTA augmented with a weight function 
that attaches a weight to all accepting states. 

\begin{definition}[\bf SFTA]
    A state-weighted finite tree automaton  (SFTA) over alphabet $F$ is a tuple
    $\cA = (Q, F, Q_f, \Delta, w)$ where $Q$ is a set of states, $Q_f
    \subseteq Q$ is the set of accepting states, $\Delta$ is a set of
    transitions of the form $f(q_1, \ldots, q_k) \rightarrow q$ where 
    $q, q_1, \ldots q_k$ are states, $f \in F$ and $w : Q_f \rightarrow
    \mathds{R}$ is 
    a function which assigns a weight $w(q)$ (from domain $W$) to each accepting 
    state $q \in Q_f$.
\end{definition}

Because CFTAs are designed to handle synthesis over clean (noise-free) data sets, 
they have only one accept state $q^o_{s_0}$ (the state with start symbol $s_0$
and output value $o$). We weaken this condition to allow multiple accept states 
with attached weights using SFTAs.

\begin{figure}
    \[
        \begin{array}{c}
        \begin{array}{cc}
            \infral{t \in T, ~~~~~ \llbracket t \rrbracket \sigma = c}
            {q^c_t \in Q}
            {(Term)}
            &
            \infral{q^{c}_{s_0} \in Q}
            {q^{c}_{s_0} \in Q_f \\ w(q^c_{s_0}) = L(o, c)}
            {(Final)}
            \\
         \end{array}
            \\
            \\
            \infral{s \rightarrow f(s_1, \ldots s_k) \in P, ~~~~~~~~~ 
            \{ q_{s_1}^{c_1}, \ldots, q_{s_k}^{c_k} \} \subseteq Q,\\
            \llbracket f(c_1, \ldots c_k) \rrbracket\sigma = c}
            {q_s^c \in Q, ~~~~~ f(q_{s_1}^{c_1}, \ldots, q_{s_k}^{c_k}) 
            \rightarrow q_s^c \in \Delta}{(Prod)}
        \end{array}
    \]
    \caption{Rules for constructing a SFTA $\cA = (Q, F, Q_f, \Delta, w)$
    given input $\sigma$, per-example loss function $L$, and grammar $G = (T, N, P, s_0)$.}
    \label{fig:wfta_rules}
\end{figure}

Given an input-output example $(\sigma, o)$ and per-example loss function $L(o,c)$, 
Figure~\ref{fig:wfta_rules} presents rules for constructing a SFTA
that, given a program $p$, returns the loss for $p$ on example $(\sigma, o)$.
The SFTA $\mathsf{Final}$ rule (Figure~\ref{fig:wfta_rules}) marks all states 
$q^c_{s_0}$ with start symbol $s_0$ as accepting states regardless of the
concrete value $c$ attached to the state. The rule also associates the loss $L(o,c)$ for concrete
value $c$ and output $o$ with state $q^c_{s_0}$ as the weight $w(q^c_{s_0}) = L(o,c)$. 
The CFTA $\mathsf{Final}$ rule (Figure~\ref{fig:cfta_rules}), in contrast, marks only 
the state $q^o_{s_0}$ (with output value $o$ and start state $s_0$) as the accepting state.

A SFTA divides the set of programs in the DSL into subsets. 
Given an input $\sigma$, all programs in a subset produce the same
output (based on the accepting state), with the SFTA assigning a
weight $w(q^c_{s_0}) = L(o,c)$ as the weight of this subset of programs. 

We denote the SFTA constructed from DSL $G$, example 
$(\sigma, o)$, per-example loss function $L$, and threshold $d$  as
$\cA_G^d(\sigma, o, L)$. We omit the subscript grammar $G$ and threshold $d$
wherever it is obvious from context.

\begin{figure}[tb]
        \begin{tikzpicture}[->,>=stealth',shorten >=1pt,auto,node distance=1.4cm]

                        \tikzset{every state/.append style={rectangle}}
 \node[state, initial by arrow, initial text={$x$}]  (x) {$1$};
            \node[state, accepting]           (1)  [right of=x] {$1,
            {\color{red} 64}$};
            \node[state, accepting]          (2)   [right of=1]    {$2,
            {\color{red} 49}$};
            \node[state, accepting]    (4) [above=1.0cm of 2]    {$4,
            {\color{red} 25}$};
            \node[state, accepting]    (3) [below=1.0cm of 2]    {$3,
            {\color{red} 36}$};
            \node[state, accepting]   (6) [right=1.0cm of 4] {$6, {\color{red} 9}$};
            \node[state, accepting]   (5) [below right=0.3cm and 1.4cm of 2]
            {$5, {\color{red} 16}$};
            \node[state, accepting]   (9) [below right=0.1cm and 1.4cm of 3]
            {$9, {\color{red} 0}$};
            \node[state, accepting]   (8)  [above=1.0cm of 4]  {$8, {\color{red}
            1}$};
            \node[state, accepting]   (12) [left of=8]  {$12, {\color{red} 9}$};
            \node[state, accepting]   (7)  [right of=8] {$7, {\color{red} 4}$};

            \path (x)   edge node [] {id} (1)
                    (1)   edge node [sloped, below]{$+2, \times 3$} (3)
                        edge node [sloped, above] {$\times 2$}  (2)
                        edge node [sloped, above] {$+3$}  (4)
                  (2)   edge node [sloped, above] {$+2, \times 2$} (4)
                        edge node [sloped, above] {$\times 3$} (6)
                        edge node [sloped, above] {$+3$} (5)
                  (3)   edge node [sloped, above]{$\times 2, + 3$} (6)
                        edge node [sloped, above]{$+2$} (5)
                        edge node [sloped, above] {$\times 3$} (9)
                  (4)   edge node [right] {$\times 2$} (8)
                        edge node [sloped, above] {$+2$} (6)
                        edge node [left] {$\times 3$} (12)
                        edge node [right] {$+3$} (7)
            ;
        \end{tikzpicture}
    \caption{The SFTA constructed for Example~\ref{ex:wfta1}}
    \label{fig:wfta1}
\end{figure}

\begin{example}\label{ex:wfta1}
    Consider the DSL presented in Example~\ref{ex:dsl}.
    Given input-output example $(\{x \rightarrow 1\}, 9)$ and weight function
    $l(c) =  (c - 9)^2$, Figure~\ref{fig:wfta1} presents the
    SFTA which represents all
    programs of height less than $3$.
\end{example}
For readability, we omit the states for terminals $2$ and $3$. For all
accepting states the first number (the number in black) represents the computed
value and the second number (the number in red) represents the weight of the
accepting state.

\subsection{Operations over SFTAs}

\begin{definition}[\bf $+$ Intersection]
Given two SFTAs \\$\cA_1 = (Q_1, F, Q^1_f, \Delta_1, w_1)$ and
    $\cA_2 = (Q_2, F, Q^2_f, \Delta_2, w_2)$, a SFTA $\cA = 
    (Q, F, Q_f, \Delta, w)$ is the $+$ intersection $\cA_1$ 
    and $\cA_2$, if the CFTA in $\cA$ is the intersection of
    CFTAs of $\cA_1$ and $\cA_2$, and the weight of 
    accept states in $\cA$ is the sum of weight of corresponding 
    weights in $\cA_1$ and $\cA_2$. Formally:
    \begin{itemize}
    \item The CFTA
    $(Q, F, Q_f, \Delta)$ is the intersection 
    of CFTAs \\ $(Q_1, F, Q^1_f, \Delta_1)$ and $(Q_2, F, Q^2_f, \Delta_2)$
    \item
        $w(q^{\vec{c_1}:\vec{c_2}}_s) = w_1(q^{\vec{c_1}}_s) +
            w_2(q^{\vec{c_2}}_s)$ (for $q^{\vec{c_1}:\vec{c_2}}_s \in Q_f$).
    \end{itemize}
\end{definition}

\noindent{Given} two SFTAs $\cA_1$ and $\cA_2$, $\cA_1 + \cA_2$
denotes the $+$ intersection of $\cA_1$ and $\cA_2$.

\begin{definition}[\bf $/$ Intersection]
    Given a SFTA \\ $\cA = (Q, F, Q_f, \Delta, w)$ and a
    CFTA $\cA^* = (Q^*, F, Q^*_f, \Delta^*)$, 
    a SFTA $\cA' = (Q', F, Q'_f, \Delta', w')$ is the 
    $/$ intersection of $\cA$ and $\cA^*$, if the FTA $\cA'$ is the 
    intersection of CFTA $\cA$ and $\cA^*$, and the weight of the 
    accepting state in $\cA'$ is the same as the weight of the
    corresponding accepting state in $\cA$. Formally:

    \begin{itemize}
    \item The CFTA
    $(Q', F, Q'_f, \Delta')$ is the intersection 
    of FTAs \\ $(Q, F, Q_f, \Delta)$ and $(Q^*, F, Q^*_f, \Delta^*)$
    \item
        $w'(q^{\vec{c_1}:\vec{c_2}}_s) = w(q^{\vec{c_1}:\vec{c_2}}_s)$ (for
            $q^{\vec{c_1}:\vec{c_2}}_s \in Q'_f$).
    \end{itemize}
\end{definition}

\noindent 
Given a SFTA $\cA$ and a CFTA $\cA^*$, $\cA / \cA^*$
denotes the $/$ intersection of $\cA$ and $\cA^*$.

Given a single input-output example, a CFTA built on that example only 
accepts programs which are consistent with that example. $/$
intersection is a simple method to prune a SFTA to only contain programs which
are consistent with an input-output example.

\begin{definition}[\bf $w_0$-pruned SFTA]
A SFTA \\ $\cA' =  (Q, F, Q'_f, \Delta, w')$ is the $w_0$-pruned 
SFTA of \\$\cA = (Q, F, Q_f, \Delta, w)$ if we remove all accept states with
weights greater $w_0$ from $Q_f$. Formally, $Q'_f = \{q \vert q \in Q_f \wedge
w_0 <= w(q)  \}$ and $w'(q) = w(q)$ if $q \in Q'_f$.
\end{definition}

\noindent
Given a SFTA $\cA$, $\cA \downarrow_{w_0}$ denotes the $w_0$-pruned SFTA of $\cA$.

\begin{definition}[\bf $q$-selection]
    Given a SFTA \\ $\cA = (Q, F, Q_f, \Delta, w)$ and a accept state $q \in Q_f$, 
    the CFTA $(Q, F, \{q\}, \Delta)$ is called the $q$-selection of SFTA $\cA$.
\end{definition}
\noindent
Given a SFTA $\cA$, the notation $\cA_q$ denotes the $q$-selection of SFTA $\cA$.

\section{Synthesis Over Noisy Data}
\label{sec:syn_noise}
Given a data set 
$ \{ (\sigma_1, o_1 ), \ldots, ( \sigma_n, o_n ) \}$ of input-output examples
and loss function $\cL(p, \cD)$ with per-example loss function $L$, 
we construct SFTAs for each input-output example $\cA_1, \cA_2,
\ldots \cA_n$ where $\cA_i = \cA(\sigma_i, o_i, L)$.

\begin{theorem}\label{thm:wfta_l}
    Given a SFTA $\cA = \cA(\sigma, o, L) = (Q, F, Q_f, \Delta, w)$, for all
    accepting states $q \in Q_f$ and for all programs $p$ accepted by the
    $q$-selection automata $\cA_q$:
    \[
        L(o, \llbracket p \rrbracket \sigma) = w(q)
    \]
\end{theorem}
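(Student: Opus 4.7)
The plan is to prove this by structural induction on the parse tree of $p$, exploiting the key invariant built into the SFTA construction (Figure~\ref{fig:wfta_rules}): every state of the form $q^c_s$ corresponds to subprograms rooted at nonterminal $s$ that evaluate to the concrete value $c$ on input $\sigma$. Once this invariant is established at the root, the theorem follows essentially from the definition of the weight function in the $\mathsf{Final}$ rule.

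First, I would observe that by the $\mathsf{Final}$ rule, every accepting state $q \in Q_f$ has the form $q = q^c_{s_0}$ for some concrete value $c$, and its weight is $w(q) = L(o, c)$. The $q$-selection automaton $\cA_q$ shares the same states and transitions as $\cA$, but declares $\{q\}$ as its only accepting state. Therefore, any program $p$ accepted by $\cA_q$ has a parse tree that rewrites (bottom-up) to exactly the state $q^c_{s_0}$.

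The core of the argument is the following invariant, proven by structural induction on subtrees $t$ of the parse tree of $p$: if $t$ rewrites to a state $q^{c'}_{s'}$ using the transitions in $\Delta$, then $\llbracket t \rrbracket \sigma = c'$. The base case is immediate from the $\mathsf{Term}$ rule: a leaf $t$ that rewrites to $q^{c'}_t$ satisfies $\llbracket t \rrbracket \sigma = c'$ by construction. For the inductive step, suppose $t = f(t_1, \ldots, t_k)$ rewrites to $q^{c'}_{s'}$ using a transition $f(q^{c_1}_{s_1}, \ldots, q^{c_k}_{s_k}) \to q^{c'}_{s'} \in \Delta$, where each $t_i$ rewrites to $q^{c_i}_{s_i}$. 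By the induction hypothesis, $\llbracket t_i \rrbracket \sigma = c_i$. The $\mathsf{Prod}$ rule guarantees that this transition was added only because $\llbracket f(c_1, \ldots, c_k) \rrbracket \sigma = c'$, and then the $(Function)$ rule of the execution semantics in Figure~\ref{fig:exec_sem} yields $\llbracket t \rrbracket \sigma = f(c_1, \ldots, c_k) = c'$.

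Applying this invariant to the root of $p$'s parse tree, which rewrites to $q = q^c_{s_0}$, gives $\llbracket p \rrbracket \sigma = c$, and hence $L(o, \llbracket p \rrbracket \sigma) = L(o, c) = w(q)$, as desired. There is no real obstacle here; the only thing to be careful about is making sure the induction is phrased over all states $q^{c'}_{s'} \in Q$ rather than only the accepting ones, since subtrees will in general rewrite to non-accepting states. The theorem then follows as the specialization of this invariant to the root and the definition of $w$ on accepting states.
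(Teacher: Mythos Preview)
Your proposal is correct and follows the same approach as the paper: identify the accepting state as $q^c_{s_0}$, argue that any $p$ accepted by $\cA_q$ satisfies $\llbracket p \rrbracket \sigma = c$, and then read off $w(q) = L(o,c)$ from the $\mathsf{Final}$ rule. The only difference is one of detail: the paper simply asserts the key invariant (``all programs accepted by state $q$ compute the same concrete value $c$ on input $\sigma$''), relying on the informal discussion of CFTA states in Section~\ref{subsec:cfta}, whereas you spell it out with an explicit structural induction on the parse tree using the $\mathsf{Term}$ and $\mathsf{Prod}$ rules.
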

\begin{proof}
    Consider any state $q \in Q_f$.
    All programs accepted by state $q$ compute the same concrete value $c$
    on the given input $\sigma$.
    Hence for all programs accepted by the $q$-selection automata $\cA_q$, 
    $
        \llbracket p \rrbracket \sigma = c
    $.
    By construction (Figure~\ref{fig:wfta_rules}), 
    $
        w(q) = L(c) = L(\llbracket p \rrbracket \sigma) 
    $
\end{proof}

Let SFTA $\cA(\cD, L)$ be the $+$ intersection of 
SFTAs defined on input-output examples in $\cD$. Formally:
\[
\cA(\cD, L) = \cA(\sigma_1, o_1, L) + \cA(\sigma_2, o_2, L) + \ldots \cA(\sigma_n,
o_n, L)
\]
Since the size of each SFTA $\cA(\sigma_i, o_i, L)$ is bounded, the cost of computing $\cA(\cD, L)$ is
$\mathcal{O}(|\cD|)$.

\begin{theorem}\label{thm:wfta_L}
Given $\cA(\cD, L) = (Q, F, Q_f, \Delta, w)$ as defined above, for all accepting
states $q \in Q_f$, for all programs $p$ accepted by the $q$-selection automata
$\cA(\cD, L)_q$:
\[
    \cL(p, \cD) = w(q)
\] 
i.e., the weight of the state $q$ measures the loss of programs by $q$ on
    data set $\cD$.
\end{theorem}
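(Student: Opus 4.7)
The plan is to prove the statement by induction on the number $n$ of input-output examples in $\cD$, leveraging Theorem~\ref{thm:wfta_l} as the base case and the definition of $+$ intersection for the inductive step.

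For the base case $n = 1$, we have $\cA(\cD, L) = \cA(\sigma_1, o_1, L)$, and $\cL(p, \cD) = L(o_1, \llbracket p \rrbracket \sigma_1)$ directly from the per-example loss decomposition given in Section~\ref{sec:lossFunction}. Theorem~\ref{thm:wfta_l} then immediately yields $\cL(p, \cD) = w(q)$ for any $p$ accepted by $\cA(\cD, L)_q$.

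For the inductive step, let $\cD_{n-1} = \{(\sigma_1, o_1), \ldots, (\sigma_{n-1}, o_{n-1})\}$ and write $\cA(\cD, L) = \cA(\cD_{n-1}, L) + \cA(\sigma_n, o_n, L)$. By the definition of $+$ intersection, every accepting state $q \in Q_f$ of $\cA(\cD, L)$ has the form $q^{\vec{c_1}:\vec{c_2}}_{s_0}$, where $q_1 = q^{\vec{c_1}}_{s_0}$ is an accepting state of $\cA(\cD_{n-1}, L)$ and $q_2 = q^{\vec{c_2}}_{s_0}$ is an accepting state of $\cA(\sigma_n, o_n, L)$, with $w(q) = w_1(q_1) + w_2(q_2)$. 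A straightforward argument on the CFTA part of the intersection shows that a program $p$ is accepted by $\cA(\cD, L)_q$ if and only if it is accepted by both $\cA(\cD_{n-1}, L)_{q_1}$ and $\cA(\sigma_n, o_n, L)_{q_2}$. Applying the inductive hypothesis to the first and Theorem~\ref{thm:wfta_l} to the second yields $\cL(p, \cD_{n-1}) = w_1(q_1)$ and $L(o_n, \llbracket p \rrbracket \sigma_n) = w_2(q_2)$. Summing and using the additive decomposition of $\cL$ gives $\cL(p, \cD) = \cL(p, \cD_{n-1}) + L(o_n, \llbracket p \rrbracket \sigma_n) = w_1(q_1) + w_2(q_2) = w(q)$.

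The main obstacle is the bookkeeping for the correspondence between accepting states in the intersected SFTA and accepting states in the two component SFTAs. Specifically, one needs that (a) every accept state of $\cA(\cD, L)$ decomposes uniquely as a pair $(q_1, q_2)$ of accept states in the components, and (b) the set of programs accepted by $\cA(\cD, L)_q$ equals the intersection of programs accepted by $\cA(\cD_{n-1}, L)_{q_1}$ and $\cA(\sigma_n, o_n, L)_{q_2}$. Both facts follow from the construction of CFTA intersection in Section~\ref{subsec:cfta} together with the construction of $+$ intersection, but care is needed because the concrete-value vectors attached to states can grow with each intersection; the recursion on transitions ensures that a derivation tree for $p$ in the intersection projects to derivation trees for $p$ in each component, and vice versa.
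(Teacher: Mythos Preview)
Your proof is correct and follows essentially the same approach as the paper: decompose an accepting state of the intersected SFTA into accepting states of the component SFTAs, apply Theorem~\ref{thm:wfta_l} to each component, and sum the resulting per-example losses using the additivity of $+$ intersection weights. The only difference is that the paper unfolds all $n$ components at once in a single direct argument, whereas you package the same computation as an induction on $n$; your explicit attention to the bookkeeping in (a) and (b) is a point the paper leaves implicit.
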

\begin{proof}
    Consider any accepting state $q \in Q_f$. 
    Since $\cA(\cD, L)$ is an intersection of SFTAs $\cA_1$ \ldots 
    $\cA_n$ (where $\cA_i = \cA(\sigma_i, o_i, L) = 
    Q_i, F, (Q_f)_i, \Delta_i, w_i)
    $), there exist accepting states\\ $q_1 \in (Q_f)_1, q_2 \in (Q_f)_2, \ldots 
    q_n \in (Q_f)_n$ such that all programs $p$ accepted by $\cA_q$ are 
    accepted by $(\cA_1)_{q_1}, (\cA_2)_{q_2} \ldots (\cA_n)_{q_n}$.
    
    \noindent From Theorem~\ref{thm:wfta_l}, for all programs $p$ accepted by $\cA_q$,
    $ 
        w_i(q_i) = L(o_i, \llbracket p \rrbracket \sigma_i) 
    $
    From definition of $+$ intersection,
    \[
        w(q) = \sum\limits_{i=1}^n w_i(q_i) = \sum\limits_{i=1}^n 
        L(o_i, \llbracket p \rrbracket \sigma_i) = \cL(p, \cD)
    \]
\end{proof}

 \begin{algorithm}
     \SetAlgoLined
     \SetKwInOut{Input}{Input}
     \Input{DSL $G$, threshold $d$, data set $\cD$, per-example loss function $L$, 
     complexity measure $C$, and objective function $U$}
\KwResult{Synthesized program $p^*$}
     $\cA(\cD, L) = (Q, F, Q_f, \Delta, w)$\\ 
     \#the SFTA over data set $\cD$ and per-example loss function $L$\\
     \ForEach{$q \in Q_f$}{
         $p_q \leftarrow \mathsf{argmin}_{p \in \cA(\cD, L)_q} C(p)$\\
         \# For each accepting state $q$, find the most optimal program $p_q$
     }

     $q^* \leftarrow \mathsf{argmin}_{q \in Q_f} 
     U(w(q), C(p_q))$\\
    $p^* \leftarrow p_{q^*}$
 \caption{Synthesis Algorithm}
     \label{alg:regsyn}
 \end{algorithm}

Algorithm~\ref{alg:regsyn} presents the base algorithm to
synthesize programs within various noisy synthesis settings.

\begin{theorem}\label{thm:main}
    The program $p^*$ returned by Algorithm~\ref{alg:regsyn}
    is equal to $p'$ where 
    \[
        p' = \mathsf{argmin}_{p \in G_d}
        U(\cL(p, \cD), C(p))
    \] 
\end{theorem}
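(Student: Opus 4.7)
The plan is to show that the SFTA $\cA(\cD, L)$ partitions the programs in $G_d$ into equivalence classes, one per accepting state $q \in Q_f$, such that all programs in class $q$ have identical loss $w(q)$ and the algorithm optimizes within each class and then across classes.

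First I would argue that the set of programs accepted by $\cA(\cD, L)$ coincides with $G_d$ (up to the standard depth-bounded CFTA construction already discussed in Section~\ref{subsec:cfta}). The $\mathsf{Final}$ rule of Figure~\ref{fig:wfta_rules} differs from the CFTA version in Figure~\ref{fig:cfta_rules} only by marking \emph{every} state $q^c_{s_0}$ as accepting (rather than just $q^o_{s_0}$) and by tagging it with weight $L(o, c)$; hence the union of the $q$-selections $\cA(\cD, L)_q$ over $q \in Q_f$ accepts exactly the same programs as would a CFTA that ignored the output constraint. Next, because $p$ has a deterministic output on each input, each $p \in G_d$ belongs to \emph{exactly one} $q$-selection $\cA(\cD, L)_q$ — namely the one indexed by the tuple of concrete values $\llbracket p \rrbracket \sigma_1, \ldots, \llbracket p \rrbracket \sigma_n$. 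This gives a partition of $G_d$ indexed by $Q_f$.

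With the partition in hand, the rest of the proof is short. Fix any $q \in Q_f$ and any program $p$ accepted by $\cA(\cD, L)_q$. By Theorem~\ref{thm:wfta_L}, $\cL(p, \cD) = w(q)$. Since $p_q = \mathsf{argmin}_{p' \in \cA(\cD, L)_q} C(p')$, we have $C(p_q) \leq C(p)$, and by the monotonicity of $U$ in its second argument,
\[
U(\cL(p, \cD), C(p)) = U(w(q), C(p)) \geq U(w(q), C(p_q)).
\]
Therefore the minimum of $U(\cL(p, \cD), C(p))$ restricted to programs in the $q$-class equals $U(w(q), C(p_q))$, attained by $p_q$. Taking the minimum across classes,
\[
\min_{p \in G_d} U(\cL(p, \cD), C(p)) = \min_{q \in Q_f} U(w(q), C(p_q)) = U(w(q^*), C(p_{q^*})),
\]
and $p^* = p_{q^*}$ is an optimal program, matching the definition of $p'$.

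The only nontrivial step is the first: verifying that the SFTA's accepting states really do form a partition of $G_d$ and that each accepted program lies in the $q$-selection whose weight matches its loss. Once that bookkeeping is done — which reduces to the determinism of $\llbracket \cdot \rrbracket$ and the $\mathsf{Final}$ rule making every $q^c_{s_0}$ accepting — the monotonicity of $U$ and Theorem~\ref{thm:wfta_L} do all the real work, so no additional calculation is needed.
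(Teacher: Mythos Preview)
Your proof is correct and follows essentially the same line as the paper's: both invoke Theorem~\ref{thm:wfta_L} to identify $w(q)$ with $\cL(p,\cD)$, use monotonicity of $U$ in its second argument to reduce the per-class minimum to $U(w(q),C(p_q))$, and then take the outer minimum over $Q_f$, finally identifying the accepted programs of $\cA(\cD,L)$ with $G_d$. The paper presents the argument as a chain of rewrites of the $\mathsf{argmin}$ expression (pushing $\min_p$ inside $U$ via monotonicity, then substituting via Theorem~\ref{thm:wfta_L}), whereas you frame it as ``partition $G_d$ by accepting state, optimize within each cell, then across cells''; the content is the same, and if anything your version is slightly more explicit about why the $q$-selections cover $G_d$, which the paper simply asserts in its last line.
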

\begin{proof}
    Given $\cA(\cD, L) = (Q, F, Q_f, \Delta, w)$. 
    $p^*$ returned by Algorithm~\ref{alg:regsyn} is equal to $p_{q^*}$, where
    $q^* =
    \mathsf{argmin}_{q \in Q_f} U(w(q), C(p_q))$, where $p_q =
    \mathsf{argmin}_{p \in \cA(\cD, L)_q} C(p)$.
    We can rewrite $q^*$ as
    \[
        \mathsf{argmin}_{q \in Q_f} U(w(q), \min_{p \in \cA(\cD, L)_q}C(p))
    \]
    Since for any $l$, $U(l, c)$ is non-decreasing with respect to $c$, 
    we can rewrite $q^*$ as 
    \[
        \mathsf{argmin}_{q \in Q_f} \min_{p \in
        \cA(\cD, L)_q} U(w(q), C(p))
    \]
    By Theorem~\ref{thm:wfta_L}, for any $p \in \cA(\cD, L)_q$:
    \[
        w(q) = \cL(p, \cD)
    \]
    \[
        q^* = \mathsf{argmin}_{q \in Q_f} \min_{p \in
        \cA(\cD, L)_q} U(\cL(p, \cD), C(p))
    \]
    Because $q^*$ is the accepting state of $p^*$ and $p \in \cA(\cD, L)$ if and
    only if $\exists~q\in Q_f. p \in \cA(\cD, L)_q$, we can rewrite the above equation as:
    \[
        p^* = \mathsf{argmin}_{p \in \cA(\cD, L)}  U(\cL(p, \cD), C(p)) 
    \]
    The set of programs accepted by $\cA(\cD, L)$ is the same set of programs in
    grammar $G_d$.
    Hence proved.
    \end{proof}

We next present several modifications of the core algorithm to solve various
synthesis problems.

\subsection{Accuracy vs. Complexity Tradeoff}
\label{subsec:acc_comp}
Given a DSL $G$, a data set $\cD$,  loss function $\cL$,
complexity measure $c$, and positive weight $\lambda$, we wish 
to find a program $p^*$ which minimizes the weighed sum of the loss function and
the complexity measure. Formally:
\[
    p^* = \mathsf{argmin}_{p \in G_d}(\cL(p, \cD) + \lambda \cdot C(p))
\]
where $G_d$ is the set of programs in DSL $G$ with size less than the threshold
$d$. By using the objective function $U(l, c) = l + \lambda c$, we can use
Algorithm~\ref{alg:regsyn} to synthesize program $p^*$ which minimizes the
objective function given above.

\subsection{Best Program for Given Accuracy}
\label{subsec:best}
Given a DSL $G$, a data set $\cD$, loss function $\cL$, complexity
measure $C$ and bound $b$, we wish to find a program $p^*$ that minimizes 
the complexity measure $C$ but has loss less than $b$.
Formally:
$p^* = \mathsf{argmin}_{p \in G_d} C(p)~\text{ s.t. } \cL(p, \cD) < b$. 
Note that this condition can be rewritten as 
\[
    p^* = \mathsf{argmin}_{p \in \cA'} C(p)
\]
where $\cA' = \cA(\cD, L) \downarrow_b$.

By the definition of $\downarrow_b$, all accepting states of $\cA'$
have weight less than $b$. Therefore all programs accepted by $\cA'$ have loss
less than $b$ (i.e. $\cL(p, \cD) < b$). Also note that if a program $p$ is not in
$\cA'$ then either it has loss greater than $b$ or it is not within the threshold $d$. 

\subsection{Forced Accuracy}
Given DSL $G$, a data set $\cD$, a subset $\cD' \subseteq \cD$, 
loss function $\cL$,
complexity measure $C$, and objective function $U$, we wish 
to find a program $p^*$ which minimizes the objective function with an added
constraint of bounded loss over data set $\cD'$. Formally:
\[
    p^* = \mathsf{argmin}_{p \in G_d} U(\cL(p, \cD), C(p)) \text{ s.t. } 
    \cL(p, \cD') \leq b
\]

We first construct a SFTA $\cA(\cD', L) \downarrow_b$ which contains all
programs consistent with loss less than or equal to $b$ over data set $\cD'$. 
After constructing $\cA(\cD, L)$ as in Algorithm \ref{alg:regsyn}, we
modify $\cA(\cD, L)$ by $/$ intersection $\cA(\cD', L)$ (after dropping the weights of
the accepting states) with $\cA(\cD, L)$ (i.e. $\cA(\cD, L)
\leftarrow \cA(\cD, L) / \cA(\cD', L)$ as in Algorithm~\ref{alg:regsyn}).
By definition of $/$ intersection and $\cA$, loss of all programs returned 
by the modified algorithm on $\cD'$ will be less than equal to $b$.

\vspace{-.1in}
\section{Use Cases}

\begin{definition}
{\bf Bayesian Program Synthesis}: 
Given a set of input-output examples $\cD = \{
(\sigma_i, o_i) \vert i = 1 \ldots n\}$, DSL grammar $G$, and a probability
distribution $\pi$, $p^*$ is the solution to the Bayesian
program synthesis problem, if $p^*$ is the most probable program in DSL $G$,
given the data set $\cD$. Formally
$p^* = \mathsf{argmax}_{p \in G} \pi(p~\vert~\cD)$.
\end{definition}

\Comment{
To restrict the search domain, we restrict the above problem by adding an
additional constraint that the size of the program $p^*$ is less than a
threshold $d$, i.e.
\[
p^* = \mathsf{argmax}_{p \in G} \pi(p~\vert~\cD~\text{ and size of } p \leq d)  
\]
}
\noindent By Bayes rule $p^*  = \mathsf{argmax}_{p \in G} \pi(\cD \vert p)\pi(p)$, so 
\[
p^* = \mathsf{argmax}_{p \in G} \big[ 
  (\log \pi(\cD \vert p))
  + (\log \pi(p))\big]\] 
Assuming independence of observations: 
\Comment{
where $G_d$ represents the set of programs $p$ which can be specified by 
grammar $G$ with size less than $d$. 
\[
p^* = \mathsf{argmax}_{p \in G_d} \big[ 
  (\log \pi(\cD \vert p))
  + (\log \pi(p))\big]
\]
}
\[
p^* = \mathsf{argmax}_{p \in G_d} \big[ 
\Big( \sum_{(\sigma_i, o_i) \in \cD} 
 \log \pi(o_i \vert \llbracket p \rrbracket \sigma_i)
\Big)
+ (\log \pi(p))
\big]
\]
Where $\pi(o_i \vert \llbracket p \rrbracket \sigma_i)$ denotes the probability
of output observation $o_i$ in the data set $\cD$, given a program $p$ 
With complexity measure  $\log \pi(p)$, per-example loss function $\log \pi(o_i
\vert \llbracket p \rrbracket \sigma_i)$ (given example $(\sigma_i, o_i)$), 
and the following loss function: 
\[
\cL(p, \cD) = \sum_{(\sigma_i, o_i) \in \cD}  
\log \pi(o_i \vert \llbracket p \rrbracket \sigma_i))
\]
the technique in Section~\ref{subsec:acc_comp}
(Algorithm~\ref{alg:regsyn}) synthesizes the most probable program $p^*$

\noindent{\bf At Most $k$ Wrong:}
Consider a setting in which, given a data set, a random procedure is allowed to
corrupt at most $k$ of these input-output examples. Given this noisy data set
$\cD$, our task is to synthesize the simplest program $p^*$ which is wrong on
at most $k$ of these input-output examples. Formally, given data set $\cD$, bound
$k$, DSL $G$, and a complexity measure $C$:
\[
p^* = \mathsf{argmin}_{p \in G} C(p) \text{ s.t. } \cL_{0/1}(p, \cD) \leq k
\]  
where $\cL_{0/1}$ is the 0/1 loss function. The best program for a given
accuracy framework (subsection~\ref{subsec:best}) allows us to synthesize $p^*$ subject
to a threshold $d$.

\Comment{
Given a data set $\cD$ with additional information that at most $k$ of these
input-output examples are incorrect, 

Consider a simple setting under which some of the input output examples
in the given data set $\cD$ are incorrect. To formally specify the setting, given 
a hidden program $p$, the data set $\cD = \{(\sigma_i, o_i) \vert i = 1 \ldots n\}$ 
contains $n$ input-output examples, most of which are correct (i.e. $\llbracket
p \rrbracket \sigma_i = o_i$) but at most $k$ of which were corrupted by a hidden 
process to produce a different ouput (i.e., $\llbracket p \rrbracket \sigma_i \neq o_i$).

We use the following loss function to analyze the performance of a given
program $p$ on given data set $D = \{(\sigma_i, o_i) \vert i = 1 \ldots n\}$
\[
    L(p, \cD) = \sum_{(\sigma_i, o_i) \in \cD} 
\mathds{1}(\llbracket p \rrbracket \sigma_i \neq o_i)
\]
where $\mathds{1} : \mathbb{B} \rightarrow \{0, 1\}$ is the indicator function.

Given an input-output example $(\sigma_i, o_i)$, we define a
per-example function $L$ for constructing the wFTA as:
\[
    L(o_i, o) = \mathds{1}(o \neq o_i)
\]
i.e., if the concrete value of the accepting state is $o_i$, we assign it 
weight $0$ otherwise we assign it weight $1$.

Given a complexity measure $C(p)$ (for example, $C(p)$ may measure the size of the
program), we use our "best program for a given
accuracy" framework to find the least complex program with at most $k$ errors.
}

\Comment{
Highly improbable examples

\subsection{Interactive Program Synthesis}
Program synthesis has been widely used in applications like Excel~\cite{polozov2015flashmeta} in 
which a user provides input-output examples one at a time. The user in this
setting is actively interacting with the program synthesis system. Our
technique allows us to query input-output examples which may significantly
complicate the synthesized program 

are framework thinks are incorrect i.e. synthesizing a program,
which
assumes that the given example is correct, is more complicated compared to the
program synthesized assuming the given example is incorrect. The interactive
nature of this setting allows our framework to gain feedback on the correctness
of this example. Based on the user's feedback, our framework can only produce
programs ware correct on this input-output example, or allow user to change the
incorrect example.
\Comment{
Since the user 
is actively interacting with our program synthesis framework, this gives 
us an opportunity to query the user about examples which the framework
thinks are incorrect. At which point the user can provide us with feedback
on the correctness of the selected example. If the user says that the provided
input-output example is indeed correct, our framework from that point on
should only consider programs which are correct on this example (i.e.,
the input-output example is correct and returning a program which
gets this example wrong is not an option). If the 
user specifies that the input-output example is incorrect, then we can ask them
to correct it and move on with our synthesis task.
}
}

\vspace{-.1in}
\section{Experimental Results}
\label{sec:results}
String transformations have been extensively studied within the 
Programming by Example community~\cite{gulwani2011automating, polozov2015flashmeta, singh2016transforming}. 
We implemented our technique (in 6k lines of Java code) and used it to solve benchmark program synthesis
problems from the SyGuS 2018 benchmark suite~\cite{alur2013syntax}.
This benchmark suite contains a range of string transformation problems,
a class of problems that has been extensively studied in past
program synthesis 
projects~\cite{gulwani2011automating, polozov2015flashmeta, singh2016transforming}.

We use the DSL from~\cite{wang2017program} (Figure~\ref{fig:str_lang})
with the size complexity measure $\mathrm{Size}(p)$. 
The DSL supports extracting and contatenating 
($\mathsf{Concat}$) substrings of the input string $x$; each substring
is extracted using the $\mathsf{SubStr}$ function with a start and end
position. A position can either be a constant index ($\mathsf{ConstPos}$) or 
the start or end of the $k^{th}$ occurrence of the match token $\tau$ 
in the input string ($\mathsf{Pos}$).

\begin{figure}
    \[
        \begin{array}{rcl}
            \text{String expr} ~~~~ e &:=&
            \mathsf{Str}(f)~\vert~\mathsf{Concat}(f, e);\\
            \text{Substring expr} ~~~~ f &:=& \mathsf{ConstStr}(s)~\vert~
            \mathsf{SubStr}(x, p_1, p_2);\\
            \text{Position} ~~~~ p &:=& \mathsf{Pos}(x, \tau, k, d)~\vert~
            \mathsf{ConstPos}(k)\\
            \text{Direction} ~~~~ d &:=& \mathsf{Start}~\vert~\mathsf{End};\\
        \end{array}
    \]
    \vspace{-.2in}
    \caption{DSL for string transformation, $\tau$ is a token, $k$ is an
    integer, and $s$ is a string constant}
    \vspace{-.2in}
    \label{fig:str_lang}
\end{figure}

\vspace{-.1in}
\subsection{Implementation Optimizations}

Instead of computing individual SFTAs for each input-output example, then
combining the SFTAs via + intersections to obtain the final SFTA, our implementation
computes the final SFTA directly working over the full data set. 
The implementation also applies two techniques that constrain the size
of the final SFTA. 
First, it bounds the number of recursive applications of the production $e :=
\mathsf{Concat}(f, e)$ by applying a {\it bounded scope height threshold} $d$.
Second, during construction of the SFTA, a state with symbol $e$ is only added 
to the SFTA if the length of the state's output value is not greater than the length of
the output string plus one.

\Comment{
In general constructing individual SFTAs and then intersecting them may
introduce opportunities to reduce the number of states at any given point in the
combined SFTA (specially in cases where some of the individual SFTAs were $w_0-$pruned).
To optimize our technique for the lexicographic objective function and tradeoff
objective function in absence of any pruning, 
our implementation does not construct the final SFTA by
constructing a combined CFTA on the entire dataset (Section 2.2,
\cite{wang2017program}), adding all states with the start symbol to the accept
states, and then computing the loss on the entire dataset by using the vector of
values attached to each state.
}

\vspace{-.1in}
\subsection{Scalability}
\label{sec:scalability}

\begin{figure*}
    {\small
    \begin{center}
        \begin{tabular}{|l|c|c| c|c| c|c| c|c|}
            \hline
            Threshold & \multicolumn{2}{c|}{1} &  \multicolumn{2}{c|}{2} & 
            \multicolumn{2}{c|}{3} & \multicolumn{2}{c|}{4}   \\ 
            \hline
            Benchmark Name & time(sec) & SFTA size & time(sec) & SFTA size & 
            time(sec) & SFTA size & time(sec) & SFTA size\\
            \hline
            bikes &    0.16&1.08&    0.73&10.56&    4.72&  56.4&  19.83& 145.8\\
            bikes-long  &     0.21&1.02&    1.37&9.42& 6.04&49.9& 26.99&
            139.35\\
            bikes-long-repeat &   0.18& 1.02&  1.06&9.42&   6.03& 49.9& 27.47& 139.35 \\
            bikes-short  &   0.15&1.08&  0.79&10.56&  3.98& 56.4&  18.62& 145.8\\
            dr-name &  X&X&  7.54&107.5&   107.18& 1547.2&  -&-\\
            dr-name-long &  X&X&  17.4&70.28&  300.9& 1077.6&  -&-\\
             dr-name-long-repeat &  X&X&  19.15&70.28&  301.3& 1077.6&  -&-\\
            dr-name-short &  X&X&  10.2&107.5&  101.5& 154.8&  -&-\\
            firstname & 0.28&1.02& 1.46&4.34&  4.024& 4.33& 3.97&4.34\\
            firstname-long & 1.72&1.04& 12.03&4.36& 39.08& 4.36& 41.22& 4.36\\
            firstname-long-repeat & 1.64&1.04& 13.96&4.36&  42.4& 4.36& 43.1& 4.36\\
            firstname-short & 0.26&1.02& 1.47&4.37& 3.93& 4.34& 3.9&4.34\\
            initials & X&X& X&X& 8.7&42.3& 30.4&42.34\\
            initials-long & X&X& X&X& 86.44&42.36& 376.56& 42.36 \\
            initials-long-repeat & X&X& X&X& 86.23& 42.36& 386.25& 42.36\\
            initials-short & X&X& X&X& 8.92& 42.34& 31.72& 42.34\\
            lastname & 0.43&2.56& 4.78&28.3& 27.29&208.35& 159.41& 741.44\\
            lastname-long & 1.93&1.37& 15.1&11.34& 112.04&50.81& 485.98& 50.8\\
            lastname-long-repeat  & 1.85&1.37& 18.35&11.34& 113.36& 50.81& 486.35& 50.8\\
            lastname-short & 0.6&2.56& 3.07&28.3& 28.3& 208.35& 160.54& 741.44\\
            name-combine & X&X& 8.49&269.9& 224.074&7485.83& -&-\\
            name-combine-long & X&X& 32.28&161.54& -&-& -&-\\
            name-combine-long-repeat & X&X& 98.46&299& -&-& -&-\\
            name-combine-short & X&X& 6.5&269.9& 207.86& 7485.83& -&-\\
            name-combine-2 & X&X& X&X& 63.490&855.34& -&-\\
            name-combine-2-long & X&X& X&X& 591.6&851.44& -&-\\
            name-combine-2-long-repeat & X&X& X&X& 592.0&851.44& -&-\\
            name-combine-2-short & X&X& X&X& 57.26&855.34& -&-\\
            name-combine-3 & X&X& X&X& 43.082&911.53& 527.29&8104.7\\
            name-combine-3-long & X&X& X&X& 193.42&649.13& -&-\\
            name-combine-3-long-repeat & X&X& X&X& 192.81&649.13& -&-\\
            name-combine-3-short & X&X& X&X& 42.266& 911.53& 526.13&8104.7\\
            \Comment{
            name-combine-4 & X&X& X&X& X&X& -&-\\
            name-combine-4-long & X&X& X&X& -&-& -&-\\
            name-combine-4-long-repeat & X&X& X&X& -&-& -&-\\
           name-combine-4-short & X&X& X&X& X&X& -&-\\
       }
            reverse-name & X&X& 6.9&269.9& 217.19&7495.9& -&-\\
            reverse-name-long & X&X& 29.55&161.53& -&-& -&-\\
            reverse-name-long-repeat & X&X& 27.6&161.53& -&-& -&-\\
            reverse-name-short & X&X& 6.84&269.9& 228.24&7485.8& -&-\\
            phone & 0.12&0.46& 0.47&1.58& 0.87& 1.58& 0.78& 1.58\\
            phone-long & 0.8&0.46& 3.9&1.58& 7.79& 1.58& 32.79&1.58\\
            phone-long-repeat & 0.69&0.46& 3.29&1.58& 7.76& 1.58& 43.24&1.58\\
            phone-short & 0.12&0.46& 0.37&1.58& 0.804& 1.578& 4.97&1.58\\
            phone-1 & 0.15&0.46& 0.44&1.58& 0.84&1.58& 3.017&1.58\\
            phone-1-long & 0.99&0.46& 3.8&1.58& 8.23&1.58& 16.58&1.58\\
            phone-1-long-repeat & 0.90&0.46& 4.1&1.58& 8.42&1.58& 17.5&1.58\\
            phone-1-short & 0.14&0.46& 0.45&1.58& 0.8&1.58&  1.5&1.58\\
            phone-2 & 0.13&0.46& 0.44&1.58& 0.83& 1.58& 3.176&1.58\\
            phone-2-long & 0.64&0.46& 2.84&1.58& 8.36&1.58& 16&1.58\\
            phone-2-long-repeat & 0.85&0.46& 3.8&1.58& 9.83&1.58& 17.55&1.58\\
            phone-2-short & 0.09&0.46& 0.47&1.58& 0.83&1.58& 2.78&1.58\\
    \Comment{
    phone-3 & X&X& X&X& X&X& -&-\\
             phone-3-long & X&X& X&X& -&-& -&-\\
             phone-3-long-repeat & X&X& X&X& -&-& -&-\\
             phone-3-short & X&X& X&X& X&X& -&-\\
             phone-4 & X&X& X&X& X&X& -&-\\
            phone-4-long & X&X& X&X& -&-& -&-\\
             phone-4-long-repeat & X&X& X&X& -&-& -&-\\
             phone-4-short & X&X& X&X& X&X& -&-\\
         }
           phone-5 & 0.18&0.23& 0.16&0.23& 0.11&0.23& 0.7& 0.23\\
            phone-5-long & 1.24&0.23& 0.94&0.23& 0.75&0.23& 4.2& 0.23\\
            phone-5-long-repeat &1.27&0.23& 1.19&0.23& 0.77&0.23& 2.96& 0.23\\
            phone-5-short& 0.17&0.23& 0.17&0.23& 0.11&0.23& 0.9& 0.23\\
            phone-6 & 0.27&0.64& 1.38&2.6& 2.67&2.61& 9.3&2.61\\
            phone-6-long & 1.84&0.64& 6.48&2.6& 24.66&2.61& 103.3&2.61\\
            phone-6-long-repeat & 2.16&0.64& 7.12&2.6& 24.69&2.61& 143.9&2.61\\
            phone-6-short & 0.28&0.64& 0.76&2.6& 2.27&2.61& 11.19&2.61\\
            phone-7 & 0.24&0.64& 1.04&2.6& 2.87&2.61& 11.141&2.61\\
            phone-7-long & 2.6&0.64& 7.8&2.6& 26.1&2.61& 108.1&2.61\\
            phone-7-long-repeat & 2.58&0.64& 6.68&2.6& 26.15&2.61& 115.42&2.61\\
            phone-7-short & 0.23&0.64& 1.13&2.6& 3.26&2.61& 10.71&2.61\\
            phone-8 & 0.23&0.64& 1&2.6& 2.65&2.61& 8.51&2.61\\
            phone-8-long & 2.33&0.64& 7.58&2.6& 25.87&2.61& 114.54&2.61\\
            phone-8-long-repeat & 1.67&0.64& 7.7&2.6& 25.45&2.61& 128.3&2.61\\
            phone-8-short & 0.27&0.64& 0.97&2.6& 2.45&2.61& 13.81&2.61\\
           \Comment{
           phone-9 & X&X& X&X& -&-& -&-\\
            phone-9-long & X&X& X&X& -&-& -&-\\
            phone-9-long-repeat& X&X& X&X& -&-& -&-\\
            phone-9-short & X&X& X&X& -&-& -&-\\
            phone-10 & X&X& X&X& -&-& -&- \\
            phone-10-long & X&X& X&X& -&-& -&- \\
            phone-10-long-repeat & X&X& X&X& -&-& -&- \\
            phone-10-short & X&X& X&X& -&-& -&- \\
univ-1 & && && -&-& &\\
            univ-1-long & && && -&-& &\\
            univ-1-long-repeat & && && -&-& &\\
            univ-1-short & && && -&-& &\\
            univ-2\\
            univ-2-long\\
            univ-2-long-repeat\\
            univ-2-short\\
            univ-3\\
            univ-3-long\\
            univ-3-long-repeat\\
            univ-3-short\\
            univ-4\\
            univ-4-long\\
            univ-4-long-repeat\\
            univ-4-short\\
            univ-5\\
            univ-5-long\\
            univ-5-long-repeat\\
            univ-5-short\\
            univ-1\\
            univ-6-long\\
            univ-6-long-repeat\\
            univ-6-short\\
        }
 \hline
        \end{tabular}
    \end{center}
    }
    \caption{Runtimes and SFTA sizes for selected SyGuS 2018 benchmarks}
    \label{table:sygus}
\end{figure*}

We evaluate the scalability of our implementation by applying it to all
problems in the SyGuS 2018 benchmark suite~\cite{SyGuS2018}. For each 
problem we use the clean (noise-free) data set for the problem provided with the benchmark suite.
We use the lexicographic objective function $U_L(l,c)$ 
with the $0/\infty$ loss function and the $c = \mathrm{Size}(p)$ complexity measure.
We run each benchmark with bounded scope height threshold  $d =$ 1, 2, 3,
and 4 and record the running time on that benchmark problem
and the number of states in the SFTA.  
A state with symbol $e$ is only added 
to the SFTA if the length of its output value is not greater than the length of the output string.

Because the running time of our technique does not depend on the specific
utility function (except for the time required to evaluate the utility
function, which is typically negligible for most utility functions, and
except for search space pruning techniques appropriate for specific combinations
of utility functions and DSLs), we anticipate that these results will 
generalize to other utility functions. All experiments are run on an 
3.00 GHz Intel(R) Xeon(R) CPU E5-2690 v2 
machine with 512GB memory running Linux 4.15.0. 
With a timeout limit of 10 minutes and bounded scope height threshold of 4, 
the implementation is able to solve 64 out of the 108 SyGuS 2018 benchmark problems. 
For the remaining 48 benchmark problems a correct program does not exist within
the DSL at bounded scope height threshold 4.

Table~\ref{table:sygus} presents results for selected SyGuS 2018 benchmarks.
We omit all name-combine-4-*, 
phone-3-*, phone-4-*, phone-9-*, phone-10-*, and univ-* benchmarks --- 
all runs for these benchmarks either do not synthesize a correct program or do not terminate. 
Table~\ref{table:sygus} presents results for all other SyGuS 2018 benchmarks.
Our synthesis technique removes all duplicates from the dataset in case of
$0/\infty$ loss.
There is a row for each benchmark problem.  
The first column presents the name of the benchmark. The next four columns present
results for the technique running with bounded scope height threshold $d =$ 1, 2, 3,
and 4, respectively. Each column has two subcolumns: the first
presents the running time on that benchmark problem 
(in seconds); the second presents the number of states in the SFTA (in 
thousands of states). An entry X indicates that the implementation
terminated but did not synthesize a correct program that agreed with all
provided input-output examples. An entry - indicates that the implementation
did not terminate. 

\Comment{
The full paper presents the results.\footnote{
Available at https://people.csail.mit.edu/rinard/paper/fse20.pdf
}
With a timeout limit of 10 minutes and bounded scope height threshold of 4, 
the implementation is able to solve 64 out of the 108 benchmark problems. 
For the remaining 48 benchmark problems a correct program does not exist within
the DSL at bounded scope height threshold 4. 
Correct programs for the *-initials-* and many of the name-combine-*
benchmark problems do not exist in the search spaces until
the bounded scope height threshold becomes 3 or more. 
}

In general, both the running times and the number of states in the
SFTA increase as the number of provided input-output 
examples and/or the bounded height threshold increases.  
The SFTA size sometimes stops increasing as the height
threshold increases. We attribute this phenomenon to the
application of a search space pruning technique that terminates
the recursive application of the production $e := \mathsf{Concat}(f, e);$
when the generated string becomes longer than the current  output string ---
in this case any resulting synthesized program will produce an
output that does not match the output in the data set. 

We compare with a previous technique that uses FTAs to solve
program synthesis problems~\cite{wang2017synthesis}. This previous technique 
requires clean data and only synthesizes programs that agree
with all input-output examples in the data set. Our technique builds SFTAs with similar
structure, with additional overhead coming from the evaluation
of the objective function. We obtained the implementation 
of the technique presented in~\cite{wang2017synthesis} and ran this implementation
on all benchmarks in the SyGuS 2018 benchmark suite. The running
times of our implementation and this previous implementation are
comparable. 

\begin{figure*}
    {\tiny
    \begin{center}
        \begin{tabular}{|l|c|c |c|c|c|c|  c|c|c|c|  c|c|c|c|  c|c|c|c| }
            \hline
            \multicolumn{3}{|l|}{Threshold} 
            & \multicolumn{4}{c|}{1} &  \multicolumn{4}{c|}{2} & 
            \multicolumn{4}{c|}{3} & \multicolumn{4}{c|}{4}   \\ 
            \hline
            Benchmark Name & n & output size & time(sec) & SFTA size  
            & DL Loss & size
& time(sec) & SFTA size  
            & DL Loss & size 
& time(sec) & SFTA size  
            & DL Loss & size 
& time(sec) & SFTA size  
            & DL Loss & size 

            \\
            \hline
            \input{dlres}
            \hline
        \end{tabular}
    \end{center}
    }
    \caption{Runtimes, SFTA sizes, Synthesized Program Loss, and its size for
    SyGuS 2018 benchmarks under DL Loss}
    \label{tbl:dlsygus}
\end{figure*}

\begin{figure}
\begin{center}
\begin{tabular}{|l|c|c |c|c|}
\hline
    \multirow{2}{*}{Benchmark} & \multirow{2}{*}{Data Set} & \multicolumn{3}{c|}{Number of Required} \\
    \multirow{2}{*}{} & \multirow{2}{*}{Size} & \multicolumn{3}{c|}{Correct Examples} \\
    \cline{3-5}
    & & 1-Delete&DL & 0/1\\
\hline
    bikes & 6 &0& 0 & 2\\
    dr-name & 4& 0&0 & 1\\
    firstname & 4&0& 0 & 2\\
    lastname & 4& 0& 2 & 4\\ 
    initials & 4& 0& 2 & 2\\
    reverse-name & 6 & 0& 0 & 2\\
    name-combine & 4& 0& 0 & 2\\
    name-combine-2 & 4 & 0& 0 & 2\\
    name-combine-3 & 4 & 0& 0 & 2\\
    phone & 6 & 0& 2 & 3\\
    phone-1 & 6 & 0& 3 & 3\\
    phone-2 & 6& 0& 2 & 3\\
    phone-5 & 7 & 0& 2 & 3\\
    phone-6 & 7 & 0& 2 & 3\\
    phone-7 & 7 & 0& 2 & 3\\
    phone-8 & 7 & 0& 0 & 1\\
\hline
\end{tabular}
\end{center}
\vspace{-.1in}
    \caption{Minimum number of correct examples required to synthesize correct a program.}
    \label{table:minexample}
\vspace{-.2in}
\end{figure}

\vspace{-.1in}
\subsection{Noisy Data Sets, Character Deletion} 
\label{sec:noisycd}

We next present results for our implementation running on small
(few input-output examples) data sets with character deletions. We use a noise
source that cyclically deletes a single character from each output 
in the data set in turn, starting with the first character, proceeding 
through the output positions, then wrapping around to the first character again. 
We apply this noise source to corrupt every output in the data set. 
To construct a noisy data set with $k$ correct (uncorrupted) outputs, we
do not apply the noise source to the last $k$ outputs in the data set. 

We exclude all *-long, *long-repeat, and *-short benchmarks and all benchmarks that
do not terminate within the time limit at height bound 4. For each remaining
benchmark we use our implementation and the generated corrupted data sets to 
determine the minimum number of correct outputs in the 
corrupted data set required for the implementation to produce a correct
program that matches the original hidden clean data set on all input-output
examples. We consider three loss functions: the $0/1$ 
and DL loss functions (Section~\ref{sec:lossFunction}) and the
following 1-Delete loss function, which is designed to work with
noise sources that delete a single character from the output stream:

\begin{definition} {\bf 1-Delete Loss Function:} The 1-Delete loss function 
$\cL_{1D}(p, \cD)$ uses the per-example loss function $L$ that is 
0 if the outputs from the synthesized program and the data set match exactly, 1 if a single
deletion enables the output from the synthesized program to match
the output from the data set, and $\infty$ otherwise: 
\[
    \cL_{1D} (p, \cD) = \sum\limits_{(\sigma_i, o_i) \in \cD} L_{1D}(\llbracket p
    \rrbracket \sigma_i, o_i), \mbox{ where }
\]
\begin{equation*}
    L_{1D}(o_1, o_2) = \begin{cases}
        0 & o_1 = o_2\\
        1 & a \cdot x \cdot b = o_1 \wedge a \cdot b = o_2 \wedge \vert x \vert =
        1\\ 
        \infty & \text{ otherwise}
    \end{cases}
\end{equation*}
\end{definition}

We use the lexicographic objective function $U_L(l,c)$ 
with $c = \mathrm{Size}(p)$ as the complexity measure and bounded scope
height threshold $d = 4$. We apply a search space pruning technique that 
terminates the recursive application of the production $e := \mathsf{Concat}(f, e);$
when the generated string becomes more than one character longer than the 
current  output string. 

Table~\ref{table:minexample} summarizes the results. 
The Data Set Size Column presents the total number of input-output examples in the
corrupted data set. The next three columns, 1-Delete, DL, and 0/1, present the minimum number
of correct (uncorrupted) input-output examples required for 
the technique to synthesize a correct program (that agrees with the 
original hidden clean data set on all input-output examples) using 
the 1-Delete, DL, and 0/1 loss functions, respectively. 

With the 1-Delete loss function, the minimum number of required correct
input-output examples is always 0 --- the implementation synthesizes,
for every benchmark problem, a correct program that matches every input-output 
example in the original clean data set even when given a data set in which
every output is corrupted. This result highlights how 1) discrete noise sources 
produce noisy outputs that leave a significant amount of information from the original
uncorrupted output available in the corrupted output and 2) a loss function
that matches the noise source can enable the synthesis technique to 
exploit this information to produce correct programs even in the 
face of substantial noise.

With the DL loss function, the implementation synthesizes a correct
program for 8 of the 16 benchmarks when all outputs in the data set 
are corrupted. For 7 of the remaining 8 benchmarks
the technique requires 2 correct input-output examples to synthesize
the correct program. The remaining benchmark requires 3 correct examples. 
The general pattern is that the technique tends to require correct examples
when the output strings are short. The synthesized incorrect programs typically 
use less of the input string. 

These results highlight how the DL loss function still extracts 
significant useful information available in outputs corrupted
with discrete noise sources. But in comparison with the 1-Delete loss function,
the DL loss function is not as good a match for the character deletion noise
source. The result is that the synthesis technique, when working with the 
DL loss function, works better with longer inputs, sometimes encounters incorrect
programs that fit the corrupted data better, and therefore sometimes requires
correct inputs to synthesize the correct program. 

With the 0/1 loss function, the technique always requires at least one and usually more correct
inputs to synthesize the correct program. In contrast to the 1-Delete and DL 
loss functions, the 0/1 loss function does not extract information from corrupted outputs. 
To synthesize a correct program with the 0/1 loss function in this scenario, the
technique must effectively
ignore the corrupted outputs to synthesize the program working only with information
from the correct outputs. It therefore always requires at least one and usually more
correct outputs before it can synthesize the correct program. 

\Comment{
We note that each SyGuS 2018 benchmark comes with a set of constants available
to the synthesis algorithm. While we would expect the general patterns to hold across a wide
range of available constants, we also would expect that some specific 
entries in Table~\ref{table:minexample} would change with different constants. 
}
\Comment{
\subsection{Minimum Number of Correct Examples}

We next present results that characterize, for the problems in Section~\ref{sec:noisycd}, 
the minimum number of correct input-output examples that the $0/1$ and 
DL loss functions require to synthesize correct programs
whose outputs match all outputs in the hidden clean data set.
We apply the cyclic character deletion noise source from Section~\ref{sec:noisycd}.
To construct a noisy data set with $k$ correct (noise-free) outputs, we
do not apply the data source to the last $k$ outputs in the data set. 

Table~\ref{table:minexample} summarizes the results. 

We omit all *-long, *long-repeat, and *-short benchmarks and all benchmarks that
do not terminate within the time limit at height bound 4. 

Column Data Set Size presents the number of input-output examples in the
data set. The next two columns, DL and 0/1, present the minimum number
of correct (uncorrupted) input-output examples required for 
the technique to synthesize a correct program (that agrees with the 
original hidden clean data set on all input-output examples) using 
the DL and 0/1 loss functions, respectively. 

In general, the DL loss function requires fewer examples than the 0/1 loss
function, which we attribute to the ability of the DL loss function to 
extract partial information even from corrupted outputs. When the DL loss
function does not produce the correct program, 

Column $n$ presents the number of
examples in the benchmark, Min $n_{\text{correct}}$ shows the minimum number of
correct examples at which the synthesis technique returns the correct answer for
the DL and $0/1$ loss functions, respectively.
For some cases,  
{\it Damerau-Levenshtein} loss function requires atleast 1 correct input-output
examples. This is due to short output strings and our mutation process. The
mutation process causes some outputs to be deleted in the same position. The
mutation process also causes deletions to be clustered around the same position. 
Because of the Damerau-Levenshtein loss, our synthesis technique can returns the
program which returns the intended output with a missing character at the same
position. The program has $0$ loss for the incorrect input-output examples which
have deletion in the same position and $1$ loss for it's left and right
positions because the output produced can we changed to the incorrect output via
a single substitution.

XXX MARTIN XX
Lastname
0/1 loss function
"Nancy FreeHafer" -> "FreeHafer" -> "reeHafer"
"Andrew Cencici" -> "Cencici" -> "Cncici"
"Jan Kotas" -> "Kotas" -> "Koas"
"Mariya Sergienko" -> "Sergienko" -> "Serienko"

Our mutation process changes FreeHafer to reeHafer at which point program
substring(7, Pos(x, "Alphabets", 2, End)) fits better than position
Rather than substring(Pos(x, "Alphabets", 2, "Begin"), Pos(x, "Alphabets", 2,
"End").

For DL
When we reached 3rd example t is deleted, DL cost for the correct program is
$3$, the cost of the program $B$ is also $3$, but $B$ is a simpler program so
it get's predicted. 
}

\vspace{-.1in}
\subsection{Noisy Data Sets, Character Replacements}

We next present results for our implementation running on larger data
sets with character replacements. The phone-*-long-repeat benchmarks
within the SyGuS 2018 benchmarks contain transformations over phone
numbers. The data sets for these benchmarks contain 400 input-output
examples, including repeated input-output examples. 

For each of these phone-*-long-repeat benchmark problems on which our technique terminates 
with bounded scope height threshold 4 (Section~\ref{sec:scalability}), we construct
a noisy data set as follows. For each digit in each output string in the
data set, we flip a  biased coin. With probability $b$, we replace the
digit with a uniformly chosen random digit 
(so that each digit in the noisy output is not equal to the original 
digit in the uncorrupted output with probability $9/10\times b$).

We then run our implementation on each benchmark problem with the 
noisy data set using the tradeoff objective function $U_\lambda(l,c) = l + \lambda \times c$
with complexity measure $c = \mathrm{Size}(p)$ and the 
following $n$-Substitution loss function:

\begin{definition}
{\bf $n$-Substitution Loss Function:} \\ The $n$-Substitution loss function 
$\cL_{nS}(p, \cD)$ uses the per-example loss function $L_{nS}$ that 
counts the number of positions where the noisy output
does not agree with the output from the synthesized program.  If the synthesized
program produces an output that is longer or shorter than the output in the 
noisy data set, the loss function is $\infty$:
\[
    \cL_{nS} (p, \cD) = \sum\limits_{(\sigma_i, o_i) \in \cD} L_{nS}(\llbracket p
    \rrbracket \sigma_i, o_i), \mbox{ where }
\]
\begin{equation*}
    L_{nS}(o_1, o_2) = \begin{cases}
        \infty & \vert o_1 \vert \neq \vert o_2 \vert \\
        \sum\limits_{i=1}^{\vert o_1 \vert} 1 \mbox{ if } o_1[i] \neq o_2[i] \mbox{ else } 0
    & \vert o_1 \vert = \vert o_2 \vert 
    \end{cases}
\end{equation*}
\end{definition}

We run the implementation for all combinations of the bounded scope threshold $b \in \{ 0.2, 0.4, 0.6 \}$ and
$\lambda \in \{ 0.001, 0.1 \}$. For every combination of $b$ and $\lambda$, 
and for every one of the phone-*-long-repeat benchmarks in the SyGuS
2018 benchmark set, the implementation
synthesizes a correct program that produces the same outputs as in the
original (hidden) clean data set. 

These results highlight, once again, the ability of our technique to work 
with loss functions that match the characteristics of discrete noise sources
to synthesize correct programs even in the face of substantial noise. 

\begin{figure}[t]
\begin{center}
\begin{tabular}{|l|c|c |c|c|}
\hline
Benchmark & Data set& DL & Output & Program  \\
    & Size & Loss & Size & Size\\
    \hline
name-combine-4 & 5 & 10 & 49 & 16 \\
phone-3 & 7 & 14 & 91 & 11\\
phone-4 & 6 & 6 & 66 & 17\\
phone-9 & 7 & 14 & 99 & 21\\
phone-10 & 7 & 14 & 120 & 21\\
\hline
\end{tabular}
\end{center}
\vspace{-.1in}
    \caption{Approximate program synthesis with DL loss.}
    \label{tbl:approximate}
\vspace{-.2in}
\end{figure}

\vspace{-.1in}
\subsection{Approximate Program Synthesis}

\Comment{
As observed in Table~\ref{tbl:sygus}, for some benchmarks, a correct program
does not exist within the DSL at low bounded scope threshold. We use our
synthesis technique to find the 
program which best fits the entire dataset given the
{\it Damerau-Levenshtein} Loss Function.
Table~\ref{tbl:dlsygus}
presents results from our implementation on the clean (noise-free) benchmark data sets
with the DL loss function, $\mathsf{Size}(p)$
complexity measure, and 
lexicographic objective function $U_L(\cL_{DL}(p,\cD), \mathrm{Size}(p))$. 
The first column presents the name of the benchmark. The next column presents
the number of input-output examples in the given benchmark.  
The next four columns present
results for the technique running with bounded scope height threshold $d =$ 1, 2, 3,
and 4, respectively. Each column has four subcolumns: the first
presents the running time on that benchmark problem 
(in seconds). The second presents the number of states in the SFTA (in 
thousands of states). The third presents the DL loss of the synthesized program
over the entire dataset. The fourth presents the size of the synthesized program. 
 An entry - indicates that the implementation
did not terminate. 
}

For the benchmarks in Table~\ref{tbl:approximate}, a correct program does
not exist within the DSL at bounded scope threshold 2. Table~\ref{tbl:approximate}
presents results from our implementation on the clean (noise-free) benchmark data sets
with the DL loss function, $\mathsf{Size}(p)$
complexity measure, lexicographic objective function $U_L(\cL_{DL}(p,\cD), \mathrm{Size}(p))$,
and bounded scope threshold 2. 
The first column presents the name of the benchmark. The next four columns
present the number of input-output examples in the benchmark data set, the DL loss incurred by the
synthesized program over the entire data set, the sum of the lengths of the
output strings of the data set (the DL loss for an empty output would be this sum), 
and the size of the synthesized program.

For the phone-* benchmarks, a correct program 
outputs the entire input telephone number but changes the punctutation, for example
by including an area code in parentheses. 
The synthesized approximate programs correctly preserve the telephone number 
but apply only some of the punctuation changes. The result is $2 = 14/7$ characters
incorrect per output for all but phone-4, which has 1 character per output incorrect. 
Each output is between $13=91/7$ and $17=120/7$ characters long. 
For name-combine-4, the synthesized approximate program
correctly extracts the last name, inserts a comma and a period, but does
not extract the initial of the first name. These results highlight the ability
of our technique to approximate a correct program when the correct program
does not exist in the program search space. 

Table~\ref{tbl:dlsygus}
presents results from our implementation on the clean (noise-free) benchmark data sets
in SyGuS 2018.
The first column presents the name of the benchmark. The next column presents
the number of input-output examples in the given benchmark. The next column 
presents 
the sum of the lengths of the
output strings of the data set.
The next four columns present
results for the technique running with bounded scope height threshold $d =$ 1, 2, 3,
and 4, respectively. Each column has four subcolumns: the first
presents the running time on that benchmark problem 
(in seconds). The second presents the number of states in the SFTA (in 
thousands of states). The third presents the DL loss of the synthesized program
over the entire dataset (compare this DL loss with the sum of the output lengths
over the data set). The fourth presents the size of the synthesized program. 
 An entry - indicates that the implementation
did not terminate.

\Comment{
\subsection{Mixture of Two Data Sets}

We next present results for our implementation running on data sets constructed
by mixing input-output examples from two benchmarks. 
For the *-long-repeat SyGuS 2018 benchmarks, we pick two compatible benchmarks
(i.e. benchmarks which run on the same input domain), then construct a
single data set with $b$ 

and mix them. We pick $b$
fraction of input-output examples from benchmark A and $1-b$ fraction of
input-output examples from benchmark B.

We then synthesize a program by using the $0/1$ Loss as the loss function, 
$\mathrm{Size}(p)$ as the complexity measure, and the lexicographic objective
function.

XXX MARTIN CAN ADD TABLE BUT IT IS NOT INTERESTING XXX

We find that for most of the cases, when we change $b$, the synthesized program
changes from Program A to Program B, where Program A and Program B are the
programs synthesized on unmodified benchmark A and B respectively. 
For pairs "phone-long-repeat, phone-5-long-repeat", "phone-long-repeat,
phone-6-long-repeat", "phone-1-long-repeat, phone-6-long-repeat",
"phone-1-long-repeat, phone-7-long-repeat", "phone-2-long-repeat,
phone-7-long-repeat", and "phone-2-long-repeat, phone-8-long-repeat",
the synthesis technique finds a new, more complex, program which satisfies all
input-output examples. Within these pairs, the first benchmark has inputs of the
form "938-242-504" and the second benchmark has inputs of the form "+106
769-858-438", where a variable sized area code is attached in front of the phone
numbers. When we mix input-output examples from both of these benchmarks, the
difference in input format allows the synthesis technique to find a program
consistent with both benchmark suits.

XXX
Still the same answer.
Some additional things to add.
}

\vspace{-.1in}
\subsection{Discussion}

\noindent{\bf Practical Applicabilty:} 
The experimental results show that our technique is effective
at solving string manipulation program synthesis problems with modestly sized
solutions like those present in the SyGuS 2018 benchmarks. More specifically, 
the results highlight how the combination of structure from the DSL, a discrete noise source that
preserves some information even in corrupted outputs, and a good match
between the loss function and noise source can enable very effective synthesis
for data data sets with only a handful of input-output examples
even in the presence of substantial noise.
Even with as generic a loss function as the 0/1 loss function, the technique is effective at 
dealing with data sets in which a significant fraction of the outputs are corrupted. 
We anticipate that these results will generalize to similar classes of program synthesis problems
with modestly sized solutions within a tractable and focused class of computations. 

We note that our current implementation does not scale to SyGuS 2018 benchmarks with larger
solutions. These benchmarks were designed to test the scalability of current and future program 
synthesis systems.  No currently extant program analysis system of which we are aware can solve 
these larger problems. 

To the extent that the SyGuS 2018 bencharks accurately represent the kinds of program
synthesis problems that will be encountered in practice, our results provide encouraging
evidence that our technique can help program synthesis systems work effectively with noisy data sets. 
Important future work in this area will more fully investigate interactions between
the DSL, the noise source, the loss function, the classes of synthesis problems that
occur in practice, and the scalability of the synthesis technique. A full evaluation
of the immediate practical applicability of program synthesis for noisy data sets, as well as
a meaningful evaluation of program synthesis more generally, awaits this future work. 

\noindent{\bf Noise Sources With Different Characteristics:} 
Our experiments largely consider discrete noise sources that preserve some information in 
corrupted outputs. The results highlight how loss functions like the 1-Delete, DL,
and $n$-Substitution loss functions 
can enable our technique to extract and exploit this preserved information to enhance
the effectiveness of the synthesis. The question may arise how well may our technique perform 
with noise sources that leave little or even no information intact in corrupted outputs?
Here the results from the $0/1$ loss function, which does not aspire to extract any 
information from corrupted inputs, may be relevant --- if the corrupted outputs considered
together do not conform to a target computation in the DSL, the technique will, in effect,
ignore these corrupted outputs to synthesize the program based on any remaining uncorrupted outputs.
A final possibility is that the noise source may systematically produce outputs characteristic 
of a valid but incorrect computation. Here we would expect the algorithm to require 
a balance of correct outputs before it would be able to synthesize the correct program.

\vspace{-.1in}
\section{Related Work}
\Comment{
We compare our technique against related approaches to solving similar problems.
}

The problem of learning programs from a set of input-output 
examples has been studied extensively~\cite{shaw1975inferring, 
gulwani2011automating, singh2016transforming}.
These techniques can be largely broken down into the following four categories:


\noindent
        {\bf Synthesis Using Solvers:} 
These systems require the user 
to provide precise semantics for the operators for the DSL they are using 
\cite{itzhaky2010simple}. Our technique, in contrast, 
only requires black-box implementations 
of these operators. A large
class of these systems depend on solvers which do not 
scale as the number of examples increases.
Since our techniques are based on tree automata, our cost linearly 
increases as the number of examples increase. 
These systems require all input-output examples 
to be correct and only synthesize programs that match all
input-output examples. 

\noindent
        {\bf Enumerative Techniques:} 
These techniques search the 
space of programs to find a single program that is consistent with 
the given examples~\cite{feser2015synthesizing, osera2015type}. 
Specifically, they enumerate all programs in 
the given DSL and terminate when they find the correct program. 
These techniques may apply different heuristics/techniques to 
prune the search space/speed up this process~\cite{osera2015type}.
These techniques require all input-output examples
to be correct and only synthesize programs that match all
input-output examples. 

\noindent
        {\bf VSA-based/Tree Automata-based Techniques:} These \\techniques
build complex data structures representing all possible programs 
compatible with the given examples~\cite{singh2016transforming,
polozov2015flashmeta, wang2017synthesis}. Current work requires
users to provide correct input-output examples. 
Our work modifies these techniques to handle noisy data and
to synthesize approximate programs that minimize an objective
function over the provided potentially noisy data set. 

\noindent
    {\bf Neural Program Synthesis/ML Approaches:}
There is extensive work that uses 
machine learning/deep neural networks to synthesize programs~\cite{raychev2016learning, 
devlin2017robustfill, balog2016deepcoder}. These techniques require a training phase
and a differentable loss function. Our technique requires
no training phase and can work with arbitrary loss functions including, for example, the 
$0/1$ loss function. Machine learning techniques are incompatible with this type of loss function. 
These systems provide no guarantees over the completeness and the optimality 
of their result, whereas our technique, due to its property of exploring all
programs of size less than a threshold, always finds a program within the 
bounded scope that minimizes the objective function.


\noindent{\bf Data Set Sampling or Cleaning:} 
There has been recent work which aspires to clean the data set or pick
representative examples from the data set for
synthesis~\cite{gulwani2011automating, raychev2016learning, pu2017selecting},
for example by using machine learning or data cleaning to select productive
subsets of the data set over which to perform exact synthesis. 
In contrast to these techniques, our proposed techniques 1) provide
deterministic guarantees (as opposed to either probabilistic guarantees
as in~\cite{raychev2016learning} or no guarantees at all as
in~\cite{pu2017selecting, gulwani2011automating}), 2) do not
require the use of oracles as in~\cite{raychev2016learning}, 3) can operate 
successfully even on data sets
in which most or even all of the input-output examples are corrupted,
and 4) do not require the explicit selection of a subset of the
data set to drive the synthesis as in~\cite{gulwani2011automating,
raychev2016learning}. 

\noindent{\bf Active Learning:} Recent research exploits the availability of
a reference implementation to use active learning for program synthesis~\cite{shen2019using}. Our
technique, in contrast, works directly from given input-output examples with no
reference implementation. 

\section{Conclusion}

Dealing with noisy data is a pervasive problem in modern computing
environments. Previous program synthesis systems target data sets
in which all input-output examples are correct to synthesize
programs that match all input-output examples in the data set. 

We present a new program synthesis technique for working with noisy data
and/or performing approximate program synthesis. 
Using state-weighted finite tree automata, this technique supports the
formulation and solution of a variety of new program synthesis problems
involving noisy data and/or approximate program synthesis. 
The results highlight how this technique, by exploiting information from a variety
of sources --- structure from the underlying DSL, information left intact
by discrete noise sources --- can deliver effective program synthesis even
in the presence of substantial noise.

\vspace{-.1in}
\begin{acks}                            
This research was supported, in part, by the Boeing Corporation and DARPA
Grants N6600120C4025 and HR001120C0015. 
\end{acks}


\arxiv{

\clearpage
\newpage }
\bibliography{citation}
\flushend


\end{document}